\documentclass[12pt]{article} 

\title{
  Stackelberg Independence%
}
\usepackage{datetime}
\newdateformat{monyear}{\monthname[\THEMONTH] \THEYEAR} 

\date{%
  \monyear\today\footnote{The latest version is available at \href{https://toomas.hinnosaar.net/stackelberg.pdf}{\url{https://toomas.hinnosaar.net/stackelberg.pdf}
  }
}
} 

\author{
Toomas Hinnosaar%
\thanks{Collegio Carlo Alberto, \href{mailto:toomas@hinnosaar.net}{\url{toomas@hinnosaar.net}}. %
}%
}

\usepackage{anysize}
\usepackage[T1]{fontenc}
\usepackage{lmodern}
\usepackage{hyperref}
\usepackage{subfig}
\usepackage{pgf,tikz}
\usepackage{natbib}
\usepackage{appendix}
\usepackage{enumerate}
\usepackage{amssymb,amsmath,amsthm}
\usepackage{thmtools,thm-restate}
\declaretheorem[name=Proposition]{proposition}  
  
\declaretheorem[name=Corollary]{corollary}  
\declaretheorem[name=Lemma]{lemma}  
\declaretheorem[name=Definition]{definition}  
  
\renewenvironment{proof}[1][Proof]{\begin{trivlist}
		\item[\hskip \labelsep {\bfseries #1}]}{\qed \end{trivlist}}

\newcommand{\N}{\mathbb N}
\newcommand{\expect}{\mathbb E}
\usepackage{cleveref}
\newcommand{\prt}{\mathcal{I}}
\newcommand{\bprt}{\boldsymbol{\prt}}
\usepackage{color}

\newcommand{\bx}{\boldsymbol{x}}
\newcommand{\bn}{\boldsymbol{n}}
\newcommand{\hn}{\hat{n}}
\newcommand{\hT}{\hat{T}}
\newcommand{\hbn}{\boldsymbol{\hn}}
\newcommand{\oX}{\overline{X}}

\crefname{figure}{figure}{figures}
\crefname{equation}{equation}{equations}

\begin{document}
\maketitle

\begin{abstract}
The standard model of sequential capacity choices is the Stackelberg quantity leadership model with linear demand. I show that under the standard assumptions, leaders' actions are informative about market conditions and independent of leaders' beliefs about the arrivals of followers. However, this Stackelberg independence property relies on all standard assumptions being satisfied. It fails to hold whenever the demand function is non-linear, marginal cost is not constant, goods are differentiated, firms are non-identical, or there are any externalities. I show that small deviations from the linear demand assumption may make the leaders' choices completely uninformative.
\end{abstract}  

\emph{JEL}:
C72, 
C73, 
D43, 
L13	

\emph{Keywords}: sequential games, oligopolies, Stackelberg leadership model

\section{Introduction} \label{S:intro}

How can one determine market characteristics and get early welfare estimates in newly developing markets? Do these markets require policy interventions? What will the long-term outcomes look like? In many markets, firms enter and build capacities sequentially. For example, ride-sharing companies (such as Uber, Lyft, BlaBlaCar, and Taxify) typically enter each geographic location at different moments in time. A natural model to study this kind of markets is the Stackelberg quantity leadership model, where firms choose their quantities (capacities) while observing the moves of earlier entrants.

In this paper, I show that under the standard assumptions of the Stackelberg model, the questions stated above are easy to answer. The standard assumptions are (1) linear demand, (2) identical firms, and (3) constant marginal costs and no externalities. Under these assumptions, there is a simple relationship between the competitive quantity and leaders' choices. Without any further knowledge about the model parameters an observer (such as a regulator or an econometrician) can learn the competitive quantity as soon as the first entrant makes a choice. This inference does not require the observer or even the firms to have correct beliefs about the future arrivals of entrants. 

I show that this result is driven by Stackelberg independence. I define Stackelberg independence as a property of sequential games, where each leader behaves independently of the number of followers. To show the importance of this property, I first prove a limit result. If it is commonly known that the number of followers is going to be large, so that the total quantity will be very close to perfectly competitive quantity, then the leaders' actions are proportional to competitive quantity. This connection holds both with linear and non-linear demand functions. The reason is simple: near the competitive quantity, any demand function can be closely approximated by a linear demand function. 
The same connection between leaders' choices and the competitive quantity continues to hold with a finite number of followers if and only if the leaders' choices are independent of the number of followers, i.e.\ under the Stackelberg independence property. 

The second part of the paper provides cautionary results. It shows that all assumptions of the standard model are necessary for the results to hold. Moreover, I provide an example that shows that even small deviations from the standard model may make the leaders' choices uninformative about market conditions. Therefore, one should be cautious with policy implications from the standard model.

The intuitive reason why the standard model is special is simple. Each potential entrant affects the incentives of the leader in two ways. First, an additional firm increases the total equilibrium quantity. This reduces the equilibrium price and, therefore, the marginal benefit of producing an additional unit. This effect pushes the quantities of all existing firms downwards. Indeed, in a simultaneous choice model (Cournot oligopoly), this is exactly what we see---each additional firm increases the total quantity, but reduces the individual quantities. 
Second, by increasing its quantity, the leader can discourage the follower from choosing a large quantity. This discouragement effect pushes leaders' quantities upwards. It is the reason why leaders typically choose larger quantities than followers in the Stackelberg model. 
The Stackelberg independence property is satisfied in the knife-edge case where the two effects are exactly equal so that additional followers (or even changed beliefs about the followers) neither increase nor decrease leaders' optimal quantities.

The standard model discussed in this paper has been extensively used in the literature. \cite{daughety-90} used a two-period model with some leaders and some followers to study the benefits of concentration. The model has been later used and extended by \cite{anderson-engers-92,pal-sarkar-2001,lafay-2010,julien-musy-saidi-2011,julien-musy-saidi-2012}; and \cite{ino-matsumura-2012} to cover more than two periods and an arbitrary number of firms in each period. In all those papers, the model has the Stackelberg independence property.
I extend this literature by showing two results. First, the implications continue to hold when allowing stochastic arrival processes and arbitrary beliefs about the arrival process. Second, I show that the characterization result is driven by Stackelberg independence.

On the other hand, there is a large literature studying sequential games with non-quadratic payoffs, including \cite{dixit-87,robson-90,linster-93,glazer-hassin-00,morgan-2003} and \cite{hinnosaar-osc-arxiv}, whose characterization results do not exhibit the Stackelberg independence property.\footnote{For a literature review on Stackelberg games, see \cite{julien-handbook}, and for sequential contests, see \cite{konrad-2009-book}.} This means that not all Stackelberg leadership models have the Stackelberg independence property. In this paper, I show that this is not a coincidence---all the assumptions of the standard model are necessary for Stackelberg independence and the simple characterization obtained in the standard model.

Methodologically, the paper builds on recent results from aggregative and sequential games. While the literature on oligopolies is very established, starting from \cite{cournot} for oligopolies with simultaneous choices and \cite{stackelberg} in leadership models, such games with non-linear demand functions have been difficult to handle. In recent years, the novel results on aggregative games\footnote{See \cite{acemoglu-jensen-2013} and \cite{jensen-2017-handbook}.} have been successfully applied to shed new light on oligopolies, for example, by \cite{nocke-schutz-2018}. In this paper, I build on the characterization results from sequential contests from \cite{hinnosaar-osc-arxiv}.

The paper is organized as follows. \Cref{S:linearmodel} studies the standard model. I first show by example with a single leader how the leader's action is informative and independent of the number of followers. I then characterize the equilibria for the general case and discuss the properties of the equilibrium. \Cref{S:stackelberg_independence} shows how these results are driven by Stackelberg independence property. \Cref{S:necessary} shows how each of the assumptions in the standard model is necessary for the results. \Cref{S:discussion} concludes. The proofs are in \cref{A:proofs}.

\section{The Standard Model} \label{S:linearmodel}

There are $n$ firms producing a homogeneous good with constant marginal cost $c \geq 0$. The (inverse) demand function is linear $P(X) = a \left(\oX - X \right)$, where $X = \sum_{i=1}^n x_i$ is the total quantity produced by all firms, $x_i \geq 0$ is the individual quantity of firm $i$, $a>0$, and $\oX>0$ is the market saturation quantity.

Firms are partitioned into $T$ groups that I call periods. The set of firms arriving in period $t$ is denoted by $\prt_t$ and their number $n_t = \#\prt_t$. That is, $\bprt = \left(\prt_1,\dots,\prt_T \right)$ is a partition of $n=\sum_{t=1}^T n_t$ firms. If firm $i \in \prt_t$ arrives before firm $j \in \prt_s$ (i.e.\ $t < s$), then firm $i$ is a \emph{leader} for firm $j$ and correspondingly firm $j$ is a \emph{follower} for firm $i$.
Firm $i$ arriving in period $t$ observes the cumulative quantity of its leaders, i.e.\ firms that arrived prior to period $t$. I denote this cumulative quantity by $X_{t-1} = \sum_{s=1}^{t-1} \sum_{j \in \prt_s} x_j$. Firm $i$ chooses $x_i$ simultaneously with other firms arriving in period $t$.

Let $x_i^*(\bn)$ denote the equilibrium quantity of firm $i$ when the sequence of firms is $\bn=(n_1,\dots,n_T)$. Stackelberg independence is defined as each $x_i^*(\bn)$ being independent on the sequence of followers $\bn^t = (n_{t+1},\dots,n_T)$. Formal definition is as follows.
\begin{definition}[Stackelberg independence]
  The model satisfies Stackelberg independence property if for all sequences $\bn$, all periods $t$, and firms $i \in \prt_t$, for each $\hbn = (\hn_1,\dots,\hn_{\hT})$ such that $\hn_s = n_s$ for all $s \leq t$, the equilibrium quantity $x_i^*(\hbn) = x_i^*(\bn)$.
\end{definition}

One possible sequence of followers is such that there are no followers. Therefore Stackelberg independence requires that each firm always behaves as if there are no followers. For example, if there is a single first-mover (that is $n_1=1$) then Stackelberg independence implies that the first-mover chooses monopoly quantity regardless of the actual sequence of followers.
Note that the property does not put any restrictions on the off-path behavior, nor does it prohibit the equilibrium behavior of firm $i$ depending on the number of firms arriving either at the same period or earlier than firm $i$.

\subsection{Example} \label{SS:linear}

Suppose that the inverse demand is linear $P(X) = a \left( \oX -X \right)$ and the marginal cost is $c \geq 0$. The competitive equilibrium quantity $\oX_c = \oX - \frac{c}{a}$ solves $P(\oX_c) = c$, so that $P(X)-c = a \left(\oX_c-X \right)$.
Firms arrive in two periods. In the first period, a single leader (called firm $1$) arrives and chooses quantity $x_1$. In the second period $n-1\geq 0$ followers arrive, observe $x_1$ and choose their quantities $x_2,\dots,x_n$ simultaneously. The total quantity is $X = \sum_{i=1}^n x_i$.

Straightforward calculations show that in equilibrium the best-response of each follower $i > 1$ is $x_i^*(x_1) = \frac{1}{n} \left( \oX_c - x_1 \right)$. Firm $1$ takes this into account and solves
\begin{equation}
  \max_{x_1} x_1 a \left(
    \oX_c - x_1 - \sum_{i = 2}^n x_i^*(x_1)
  \right)
  =
  \frac{1}{n} \max_{x_1} x_1 a \left(\oX_c - x_1\right).
\end{equation}
Note that $n$ enters the maximization problem multiplicatively. While $n$ affects the leader's profit, it does not affect the maximization problem and the maximizer. 
The leader's optimal quantity is the monopoly quantity $x_1^* = \frac{\oX_c}{2}$ regardless of $n$.

The equilibrium quantities for various values of $n$ are illustrated by \cref{F:linear}. 
As $n$ increases, the total equilibrium quantity (marked with circles) raises to competitive quantity $\oX_c$, but the leader's quantity (solid horizontal line) remains constant at $x_1^*=\frac{\oX_c}{2}$. 
This is an example of the Stackelberg independence property---the leader's behavior is independent of the number of followers.
\begin{figure}[!ht]%
    \centering
    \includegraphics[width=0.45\textwidth,trim={5px 12px 17px 10px},clip]{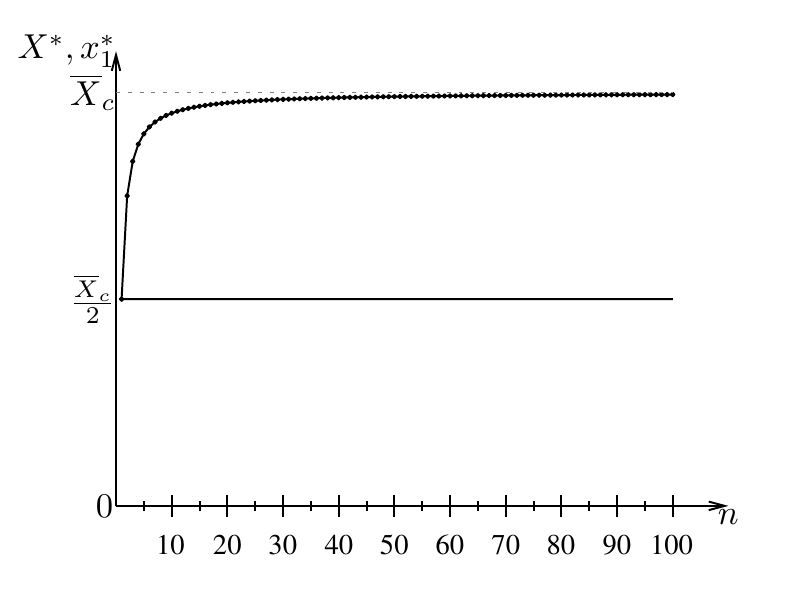}
    \caption{Equilibria with linear demand function in Stackelberg model with one leader and $n-1$ followers. \label{F:linear}}%
\end{figure}

Just by observing the leader's quantity $x_1^*$, an observer can immediately determine the competitive quantity $\oX_c = \oX - \frac{c}{a} = 2 x_1^*$. This does not require the observer to wait until the followers have made their choices. Moreover, the observer does not need to know either the number of followers or what the leader thinks the number of followers is.
When the followers arrive and make their choices, the observer can determine how competitive the market is by comparing the total quantity $X^*$ to the competitive quantity $\oX_c$. The dead-weight loss is proportional to the distance $\oX_c - X^*$.

\subsection{Equilibria} \label{SS:characterization_linear}

I now show that all the properties of the previous example generalize to arbitrary sequence $\bn$. Moreover, to formalize the fact that even the beliefs about the arrival process do not play any role in the equilibrium characterization, I allow a general arrival process and beliefs in this subsection. In particular, I assume that the sequence $\bn$ is a random variable that may come from any distribution. The only restriction that I impose is that no firms arrive after some finite period $T$. Then $\bn=(n_1,\dots,n_T)$ and $\bprt=(\prt_1,\dots,\prt_T)$ denote realizations of the random process.

Firm $i \in \prt_t$ that arrives in period $t$ observes the cumulative quantity $X_{t-1}$ and the number of firms arriving in period $n_t$. It may also get some public or private signals. Using all this information, firm $i$ forms a belief about the future arrival process $\bn^t = (n_{t+1},\dots,n_T)$. Note that these beliefs may be different for different firms in the same period and can depend on $X_{t-1}$ as well as on the individual quantities of the leaders.

\begin{proposition}[Characterization Result for the Standard Model] \label{P:standard}
  The total equilibrium quantity $X^*$ and individual quantities $(x_1^*,\dots,x_n^*)$ are given by
  \begin{equation} \label{E:eq_standard}
    X^* = \left[
    1 - \frac{1}{\prod_{s=1}^T (1+n_s)}
  \right] \oX_c, \;\;\text{ and } x_i^* = \frac{\oX_c}{\prod_{s=1}^t (1+n_s)}, \;\; \forall t \in \{1,\dots,t\}, \forall i \in \prt_t.
  \end{equation}
\end{proposition}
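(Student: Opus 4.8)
The plan is to prove the formula by backward induction on the period $t$, establishing along the way a stronger statement: for any history in which the cumulative quantity of the leaders is $X_{t-1}$, the equilibrium continuation is such that the total quantity produced from period $t$ onward brings the grand total to $X_{t-1} + \frac{1}{\prod_{s=t}^{\tau}(1+n_s)}(\oX_c - X_{t-1})$ for the realized tail length $\tau$, and more importantly that each firm $i \in \prt_t$ chooses $x_i^* = \frac{1}{1+n_t}(\oX_c - X_{t-1})$ regardless of its beliefs about the tail $\bn^t$. The key structural fact, which I would isolate first as a lemma (or simply verify by direct computation), is that in period $t$ each firm faces a residual linear demand $P(X) - c = a(\oX_c - X_{t-1} - x_i - \sum_{j \in \prt_t, j \neq i} x_j - (\text{continuation quantity of followers}))$, and that by the inductive hypothesis the expected continuation quantity of the followers, conditional on the period-$t$ total $Y_t = X_{t-1} + \sum_{j \in \prt_t} x_j$, is an affine function of $Y_t$ of the form $\bigl(1 - \frac{1}{\prod_{s=t+1}^{\tau}(1+n_s)}\bigr)(\oX_c - Y_t)$ in each realization; crucially the \emph{coefficient} $\frac{1}{\prod_{s=t+1}^{\tau}(1+n_s)}$ multiplies $(\oX_c - Y_t)$, so it enters firm $i$'s objective multiplicatively, exactly as in the one-leader example.

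Concretely, the steps are: (1) Base case $t = T$: the last-period firms play a simultaneous Cournot game on residual demand $a(\oX_c - X_{T-1} - X_{\text{period }T})$, whose symmetric equilibrium gives each firm $\frac{1}{1+n_T}(\oX_c - X_{T-1})$ and a period-$T$ total of $\frac{n_T}{1+n_T}(\oX_c - X_{T-1})$; this matches the claimed formulas with the product running over the single index $T$. (2) Inductive step: assume the claim holds for all periods after $t$ and for every possible tail realization. A firm $i \in \prt_t$, holding \emph{any} belief over the tail $\bn^t$, maximizes $\expect\bigl[x_i \cdot a(\oX_c - Y_t - (\text{followers' continuation}))\bigr]$ where $Y_t = X_{t-1} + x_i + \sum_{j\neq i} x_j$. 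Substituting the inductive hypothesis, the followers' continuation in each tail realization equals $\bigl(1 - \frac{1}{\prod_{s=t+1}^{\tau}(1+n_s)}\bigr)(\oX_c - Y_t)$, so the bracket becomes $\expect\bigl[\frac{1}{\prod_{s=t+1}^{\tau}(1+n_s)}\bigr] \cdot a(\oX_c - Y_t)$; writing $\kappa$ for this positive constant, firm $i$ maximizes $\kappa a\, x_i(\oX_c - X_{t-1} - x_i - \sum_{j\neq i}x_j)$, and $\kappa$ drops out of the first-order condition. (3) The resulting system is the \emph{same} Cournot system as in the base case with saturation $\oX_c - X_{t-1}$, so the symmetric equilibrium is $x_i^* = \frac{1}{1+n_t}(\oX_c - X_{t-1})$, independent of beliefs; then $Y_t = X_{t-1} + \frac{n_t}{1+n_t}(\oX_c - X_{t-1})$, i.e.\ $\oX_c - Y_t = \frac{1}{1+n_t}(\oX_c - X_{t-1})$, which feeds the recursion and yields the telescoping product. (4) Unwinding from $t=1$ with $X_0 = 0$ gives $x_i^* = \frac{\oX_c}{\prod_{s=1}^{t}(1+n_s)}$ and the stated expression for $X^*$.

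The main obstacle is step (2): making precise that the followers' \emph{equilibrium} continuation quantity is affine in $Y_t$ with the exact coefficient claimed, in a setting where beliefs over the arrival process are arbitrary and possibly history-dependent. One must be careful that the inductive hypothesis is stated for every deterministic tail realization (so that it can be integrated against any belief), that uniqueness of the period-$t$ Cournot equilibrium is invoked to pin down the symmetric solution, and that the argument does not secretly assume firms agree on beliefs — the point is precisely that whatever belief firm $i$ holds, it only scales the objective by a positive constant. A minor additional check is the corner issue: one should confirm $\oX_c - X_{t-1} > 0$ along the equilibrium path so that the interior first-order conditions are valid and all quantities are nonnegative, which follows inductively since $\oX_c - X_{t-1} = \frac{\oX_c}{\prod_{s=1}^{t-1}(1+n_s)} > 0$.
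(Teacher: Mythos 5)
Your proposal is correct and follows essentially the same backward-induction argument as the paper's proof: the belief about the tail enters each period-$t$ firm's objective only as a positive multiplicative constant, so the problem reduces to a Cournot game on the residual gap $\oX_c - X_{t-1}$, yielding the recursion $\oX_c - X_t^* = \frac{1}{1+n_t}\left(\oX_c - X_{t-1}\right)$ and the telescoping product. One slip in your opening restatement: the grand total after the realized tail should be $X_{t-1} + \bigl(1 - \frac{1}{\prod_{s=t}^{\tau}(1+n_s)}\bigr)\left(\oX_c - X_{t-1}\right)$ rather than $X_{t-1} + \frac{1}{\prod_{s=t}^{\tau}(1+n_s)}\left(\oX_c - X_{t-1}\right)$, but your concrete steps (1)--(4) use the correct expression, so the argument goes through.
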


The proof in \cref{A:proofs} generalizes the example in the previous section using mathematical induction. As in the example, the best-response function of the firms in the last period is linear. Therefore both the total quantity $X^*$ and the net demand $P(X^*)-c$ induced by each $X_{T-1}$ are linear functions of $X_{T-1}$. Moreover, straightforward calculations show that net demand is $\frac{1}{1+n_T} a \left( \oX_c - X_{T-1} \right)$. Therefore when firms in period $T-1$ form an expectation about this object, the term that depends on the number of followers is multiplicatively separable from the maximization problem and does not affect the optimum. Standard mathematical induction shows that this is true for all players.

\subsection{Properties} 

As in the example, the general characterization in \cref{P:standard} provides a clear connection between the actions of the leaders and the model parameters. Just by observing the choice $x_i^*$ of one player $i$ in period $t$ and the number of firms in each period up to period $t$, an observer can determine the competitive quantity $\oX_c = \oX - \frac{c}{a} = x_i^* \prod_{s=1}^t (1+n_s)$. By observing all quantities, the observer can determine $\oX_c-X^*$, i.e.\ distance from the competitive equilibrium quantity to equilibrium quantity, which is proportional to the dead-weight loss.  
Note that these observations are independent on the arrival process and beliefs about the arrival process.

These properties of the standard model are a consequence of the Stackelberg independence property. The maximization problem of each firm is independent of the number of followers it has. Therefore it is natural that its choice does not depend either on the number of followers or the belief about their arrival process. In the next section, I show that this connection is even deeper. 
The characterization formulas in \cref{P:standard} hold more generally in the limit with a large number of followers. 
Therefore for the same results to hold for a finite number of players, the choices of the leaders must be the same with a finite and infinite number of players, i.e.\ under Stackelberg independence.

\section{Competitive Limits and Stackelberg Independence} \label{S:stackelberg_independence}

In this section, I relax the standard model by allowing demand to be non-linear. In particular, let $P(X)$ be any strictly decreasing and smooth demand function in $[0,\oX]$, such that the first-order necessary conditions are also sufficient and there is a saturation point $\oX$ such that $P(X)=0$ if and only if $X \geq \oX$. This implies that for each $c \geq 0$ there is a unique competitive equilibrium quantity $\oX_c \in \left[0,\oX \right]$ with $P(\oX_c)=c$. 

\subsection{Competitive Limits}

The first result establishes leaders' behavior in competitive limits, where the number of firms converges to infinity. The first part is intuitive---as the number of firms becomes large, the total equilibrium quantity converges to competitive quantity $\oX_c$. This has been shown earlier in various settings, for example by \cite{robson-90} and \cite{hinnosaar-osc-arxiv}. Note that this aggregate limit is the same regardless of the period in which period the number of firms converges to infinity.

The limiting behavior of individual firms depends on the period in which the number of firms is increased. Naturally, if a firm arrives simultaneously with a large number of firms, they all produce negligible quantities. Similarly, each firm that follows an infinitely large number of leaders also produces a negligible quantity in the competitive limit.

The novel part of the proposition is the limit behavior of the finite number of leaders. The leaders' quantities are uniquely determined by the competitive equilibrium quantity $\oX_c$ and the number of firms in each period up to the arrival of this particular leader. Comparing the limiting quantity of a leader in \cref{P:limits} to the one from the standard model in \cref{P:standard} reveals that the leaders' behavior is identical in both cases. This is natural, as the non-linear function could be closely approximated by a linear function near the competitive equilibrium.

\begin{proposition}[Competitive Limits] \label{P:limits}
Let $\oX_c = P^{-1}(c)$ be the competitive equilibrium quantity with inverse demand $P(X)$ and marginal cost $c \geq 0$. Fix a sequence $\bn=(n_1,\dots,n_T)$ and let us increase $n_t$ in a particular period $t$. Then the limiting total quantity $\lim_{n_t \to \infty} X^*(\bn) = \oX_c$ and individual quantities for each firm $i \in \prt_s$ are
\begin{equation} \label{E:limits}
  \lim_{n_t \to \infty} x_i^*(\bn) = 
  \begin{cases}
    0 & \forall s \geq t, \\
    \frac{\oX_c}{\prod_{k=1}^s (1+n_k)} & \forall s<t.
  \end{cases}
\end{equation}
\end{proposition}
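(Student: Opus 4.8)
The plan is to run a backward induction over periods, tracking not the continuation inverse demand itself --- which collapses to the constant $c$ in the limit and becomes uninformative --- but the ratio that governs equilibrium quantities. All objects below depend on the varying $n_t$, and every limit is taken as $n_t\to\infty$. I would first import from \cite{hinnosaar-osc-arxiv} the backward-induction description of equilibrium: for each period $s$ let $R_s$ denote the continuation inverse demand --- the price obtained once firms in periods $s+1,\dots,T$ have best-responded, as a function of the cumulative quantity $X_s$ at the end of period $s$, with $R_T=P$. Under the maintained assumptions (identical firms, sufficiency of first-order conditions) the stage game in period $s$ is a Cournot game among $n_s$ identical firms facing $R_s$, with a unique symmetric equilibrium: every firm in $\prt_s$ produces $\rho_s(X_s)$, where $\rho_s(Y):=\frac{R_s(Y)-c}{-R_s'(Y)}$; the end-of-period cumulative $X_s^*(z)$ induced by an entering cumulative $z$ is the fixed point of $Y=z+n_s\rho_s(Y)$; and $R_{s-1}(z)=R_s\bigl(X_s^*(z)\bigr)$. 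Two elementary facts follow by a one-line backward induction: $R_s(\oX_c)=c$ for every $s$ (if the cumulative already equals $\oX_c$ the price is $c$, so the first-order conditions make every later firm produce zero, and the cumulative stays at $\oX_c$), and therefore $\rho_s(\oX_c)=0$ and $\rho_s'(\oX_c)=-1$. The aggregate limit $X^*\to\oX_c$ is already known (\cite{robson-90,hinnosaar-osc-arxiv}); since cumulative quantity is nondecreasing across periods and no finite subgame attains $\oX_c$ (again because $\rho_s(\oX_c)=0$ forces zero production), this forces $X_s^*\to\oX_c$ for every $s\ge t$, whence $x_i^*=\rho_s(X_s^*)\to\rho_s(\oX_c)=0$ for all $i\in\prt_s$ with $s\ge t$ (for $s=t$ even more directly, $x_i^*=(X_t^*-X_{t-1})/n_t\le\oX_c/n_t$). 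This settles the aggregate claim and the first line of \eqref{E:limits}.

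The heart of the argument is the claim that $\rho_s\to\oX_c-(\cdot)$ uniformly on $[0,\oX_c]$ for every $s\le t-1$. \emph{Base case $s=t-1$.} Here $\rho_t$ is fixed while $n_t\to\infty$, so the Cournot competitive limit gives $X_t^*(z)\to\oX_c$ uniformly on $[0,\oX_c]$, and a first-order analysis of $Y=z+n_t\rho_t(Y)$ using $\rho_t(\oX_c)=0$, $\rho_t'(\oX_c)=-1$ sharpens this to $\oX_c-X_t^*(z)=\frac{\oX_c-z}{1+n_t}+O(1/n_t^2)$. Composing with $R_t$ near $\oX_c$ gives $(1+n_t)\bigl(R_{t-1}(z)-c\bigr)\to\bigl(-R_t'(\oX_c)\bigr)(\oX_c-z)$, and dividing the numerator and denominator of $\rho_{t-1}$ by $1+n_t$ yields $\rho_{t-1}(z)\to\oX_c-z$. \emph{Inductive step $s\to s-1$.} Differentiating the fixed-point identity $X_s^*(z)=z+n_s\rho_s\bigl(X_s^*(z)\bigr)$ and combining it with $R_{s-1}=R_s\circ X_s^*$ produces the clean recursion
\[
  \rho_{s-1}(z)=\rho_s\bigl(X_s^*(z)\bigr)\Bigl(1-n_s\,\rho_s'\bigl(X_s^*(z)\bigr)\Bigr).
\]
Feeding in the inductive hypothesis $\rho_s\to\oX_c-(\cdot)$: the period-$s$ fixed point converges, $X_s^*(z)\to\hat X_s(z):=\oX_c-\frac{\oX_c-z}{1+n_s}$; the bracket converges to $1+n_s$; and hence $\rho_{s-1}(z)\to\bigl(\oX_c-\hat X_s(z)\bigr)(1+n_s)=\oX_c-z$, which completes the induction.

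With the ratios pinned down, a forward induction on $s$ (starting from $X_0=0$ and using $X_s^*(\cdot)\to\hat X_s(\cdot)$ uniformly) gives $X_s^*\to X_s^\infty$ with $\oX_c-X_s^\infty=\frac{\oX_c-X_{s-1}^\infty}{1+n_s}$, which telescopes to $\oX_c-X_s^\infty=\frac{\oX_c}{\prod_{k=1}^s(1+n_k)}$; hence for $i\in\prt_s$ with $s<t$,
\[
  x_i^*=\rho_s(X_s^*)\to\oX_c-X_s^\infty=\frac{\oX_c}{\prod_{k=1}^s(1+n_k)},
\]
which is the second line of \eqref{E:limits}. The one genuinely delicate point is the mode of convergence: pushing $\rho_s\to\oX_c-(\cdot)$ through the recursion requires control of $(X_s^*)'$, hence of $\rho_s'$, so each application consumes one derivative. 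Because there are finitely many periods and $P$ is smooth this is harmless --- I would establish the base case in the $C^{t}$ norm, via Taylor's theorem with uniform remainder applied to the smooth maps $z\mapsto X_t^*(z)$ and $z\mapsto R_t\bigl(X_t^*(z)\bigr)$, and observe that each step preserves convergence after dropping one order. Making this uniform smooth control of the period-$t$ fixed point and of the compositions $R_s\circ X_s^*$ precise is the main technical obstacle; everything else reduces to the two fixed-point computations above.
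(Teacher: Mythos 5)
Your argument is correct in substance, but it takes a genuinely different route from the paper. The paper's proof imports the closed-form characterization from \cite{hinnosaar-osc-arxiv}: the scalar equilibrium conditions $X^*=\sum_k S_k(\bn)g_k(X^*)$ and $x_i^*=g(X^*)[1-\sum_k S_k(\bn^s)g_k'(X^*)]$, plus two lemmas --- $g_k(\oX_c)=0$, $g_k'(\oX_c)=-1$ for all $k$, and the combinatorial identity $1+\sum_k S_k(\bn)=\prod_k(1+n_k)$ --- after which the limit $n_t\to\infty$ is a purely scalar computation (the key intermediate object being $\lim n_t g(X^*)=(1+n_t)\oX_c/\prod_k(1+n_k)$). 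You instead re-derive the backward-induction structure directly, working with continuation inverse demands $R_s$ and the ratios $\rho_s=(R_s-c)/(-R_s')$; your recursion $\rho_{s-1}(z)=\rho_s(X_s^*(z))\bigl(1-n_s\rho_s'(X_s^*(z))\bigr)$ is exactly the structure that generates the paper's $g_k$/$S_k$ expansion, and your key facts $\rho_s(\oX_c)=0$, $\rho_s'(\oX_c)=-1$ are the analogues of the paper's \cref{L:glimits}. What your route buys is self-containedness (no $S_k$ combinatorics, no appeal to the explicit solution formula) and a transparent interpretation: the continuation demand becomes effectively linear with the right slope, which is the economic heart of the result. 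What it costs is precisely the point you flag: the limit statement is now about implicitly defined \emph{functions}, so you need uniform convergence of $\rho_s$ together with derivative-level control (the $(1+n_t)$-scaled convergence of $R_{t-1}-c$ and $-R_{t-1}'$ at the base, and one derivative consumed per backward step), which is doable with smooth $P$ and finitely many periods but tedious; the paper's scalar-limit approach sidesteps this entirely because the $n_t$-dependence sits in the integer coefficients $S_k$. Two smaller points to tighten if you write this up: your claim that $X_s^*\to\oX_c$ for all $s\ge t$ needs the monotonicity and continuity of the subgame aggregate map (so that $X_T^*\to\oX_c$ with $X_t^*$ bounded away from $\oX_c$ yields a contradiction via $\sup_{z\le\oX_c-\delta}\Phi(z)<\oX_c$), and $\rho_s'(\oX_c)=-1$ uses $R_s'(\oX_c)\neq 0$, which itself follows by the backward induction $R_{s-1}'(\oX_c)=R_s'(\oX_c)/(1+n_s)$ starting from $P'(\oX_c)<0$.
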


The proof is in \cref{A:proofs}, but to illustrate the argument let me discuss the single leader case studied in \cref{SS:linear}, i.e.\ $\bn = (1,n-1)$ with $n \to \infty$ here. 
The followers observe $x_1$ and maximize $x_i \left(P(X)-c \right)$. Combining their optimality conditions gives an equation that defines total equilibrium quantity $X^*(x_1)$ for any given $x_1$, optimality condition for firm $i$ is $x_i^*  = - \frac{P\left( X^*(x_1) \right)-c}{P'\left( X^*(x_1) \right)} =  g\left( X^*(x_1) \right)$. Adding up the conditions for all followers gives a condition
\begin{equation} \label{E:leaderquantity}
  x_1^*
  = X^*(x_1) - (n-1) g\left( X^*(x_1) \right) 
  ,
\end{equation}
which implicitly defines the aggregate best-response function of all followers. Inserting this into the maximization problem of the leader and taking the optimality conditions gives an equilibrium condition
\begin{equation} \label{E:eq1n}
  X^* = n g\left( X^* \right) - (n-1) g'\left(X^*\right) g\left( X^* \right).
\end{equation}
Clearly, as $n \to \infty$, the left-hand $X^* \to \oX_c$ and $g(X^*) \to g(\oX_c)=0$. Moreover, 
\[
  \lim_{n\to \infty} g'(X^*) = -\frac{[P'(\oX_c)]^2 - [P(\oX_c)-c]P''(\oX_c)}{[P'(\oX_c)]} = -1.
\]
Therefore the limit of \cref{E:eq1n} implies that $\lim_{n \to \infty} n g(X^*) = \frac{\oX_c}{2}$. 
Taking the limit from the leader's equilibrium quantity defined by \cref{E:leaderquantity} gives
\[
  \lim_{n \to \infty} x_1^* = \lim_{n \to \infty} n g(X^*) = \frac{\oX_c}{2}. 
\]

\subsection{Stackelberg Independence}

The combination of the limit result with the Stackelberg independence property gives a precise prediction for the equilibrium behavior of firms. Namely, all firms in periods $s<T$ may potentially have a large number of followers. By Stackelberg independence, their equilibrium behavior must be independent of the number of followers, so their equilibrium quantity must always be equal to the limit found in \cref{P:limits}. This is stated as \cref{C:limit}.
\begin{corollary}[Competitive Limits and Stackelberg Independence] \label{C:limit}
    If the model has the Stackelberg independence property, then for all $s < T$ and all $i \in \prt_s$,
    \begin{equation} \label{E:xiunderSI}
       x_i^* = 
      \frac{\oX_c}{\prod_{k=1}^s (1+n_k)}.
    \end{equation}
\end{corollary}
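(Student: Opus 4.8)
The plan is to combine the Stackelberg independence property with the competitive limit of \cref{P:limits} by attaching a single large ``bulk'' period of followers right after period $s$ and letting its size diverge. Fix a sequence $\bn=(n_1,\dots,n_T)$, a period $s<T$, and a firm $i\in\prt_s$. For an arbitrary $m\in\N$, I would consider the truncated-and-extended sequence $\hbn(m)=(n_1,\dots,n_s,m)$, which has $\hT=s+1$ periods and agrees with $\bn$ on every period up to and including $s$. This is an admissible instance of the model, and firm $i$ still arrives in period $s$ under $\hbn(m)$.

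First I would invoke the definition of Stackelberg independence with the reference sequence $\bn$, the period $t=s$, and the firm $i\in\prt_s$: since $\hbn(m)$ satisfies $\hn_{s'}=n_{s'}$ for all $s'\le s$, the property gives $x_i^*(\hbn(m))=x_i^*(\bn)$ for every $m\in\N$. In particular, the quantity $x_i^*(\hbn(m))$ is constant in $m$ and equals $x_i^*(\bn)$.

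Next I would apply \cref{P:limits} to the family $\hbn(m)$, increasing the number of firms in period $t=s+1$, i.e.\ letting $m\to\infty$. Because the demand function and the marginal cost are unchanged, the competitive quantity $\oX_c$ is the same across all $\hbn(m)$, and firm $i$ sits in a period $s<s+1=t$, so the second branch of \cref{E:limits} yields $\lim_{m\to\infty}x_i^*(\hbn(m))=\oX_c/\prod_{k=1}^s(1+n_k)$. Combining this with the previous step --- on whose left-hand side $x_i^*(\hbn(m))$ is the constant $x_i^*(\bn)$ --- gives $x_i^*(\bn)=\oX_c/\prod_{k=1}^s(1+n_k)$, which is exactly \cref{E:xiunderSI}.

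The argument is essentially a two-line composition of two results already available, so there is no deep obstacle; the points that require care are bookkeeping ones. I must check that $\hbn(m)$ is a legitimate instance of the model (a finite sequence of group sizes, with period $s+1$ allowed to be the last period) and that the identification of ``firm $i$'' across $\bn$ and $\hbn(m)$ is the one intended in the definition, so that Stackelberg independence applies verbatim. I should also confirm that \cref{P:limits} is being used in the admissible configuration --- a deterministic sequence whose divergent period is the terminal one --- and, more substantively, that $x_i^*(\cdot)$ is well defined (the equilibrium is unique) so that both ``the'' equilibrium quantity and its limit are meaningful. This uniqueness is the only genuinely load-bearing hypothesis, and it is already presupposed by the statements cited.
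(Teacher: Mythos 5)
Your proposal is correct and is essentially the paper's own argument: the paper likewise derives \cref{C:limit} by noting that Stackelberg independence pins $x_i^*$ to its value along any follower sequence agreeing with $\bn$ up to period $s$, and then sends the number of followers to infinity so that \cref{P:limits} delivers the formula. Your concrete choice of appending a single divergent period $(n_1,\dots,n_s,m)$ is just an explicit instantiation of the limit the paper invokes, with the bookkeeping (admissibility, identification of firm $i$, $s<t$) handled correctly.
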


Note that the behavior of the firms in the last period is not determined by \cref{C:limit}, as they do not have any followers and therefore Stackelberg independence has no implications on their behavior. If we extend the argument slightly, by allowing the possibility that they may also have followers and therefore their behavior is also characterized by \cref{E:xiunderSI}, we can add up all equilibrium quantities and get a unique prediction for the total equilibrium quantity
\[
   X^* = \left[
    1
    -
    \frac{1}{\prod_{k=1}^T (1+n_k)}
  \right] \oX_c.
\]

\section{Necessary Conditions} \label{S:necessary}

In this section, I show that the assumptions of the standard model discussed above are necessary for the results. I relax the assumptions of the standard model one-by-one and show that the Stackelberg independence property fails.

\subsection{Linearity} \label{SS:linearity}

The first main assumption of the standard model is that the demand function is linear. More precisely, the difference between demand and marginal cost is linear, i.e.\ $P(X)-c = a\left(\oX_c-X \right)$. 
Relaxing this assumption, I assume that $P(X)$ is a continuously differentiable (but possibly non-linear) function that satisfies the regularity conditions so that the equilibrium is interior and the second-order conditions for each firm are satisfied.
The following result shows that if the Stackelberg independence property holds at least for two-period deterministic arrival processes $\bn = (n_1,n_2)$ and for all constant marginal costs $c \geq 0$, then the function $P(X)-c$ must be linear in $X$.

\begin{proposition}[Linearity is Necessary] \label{P:necessary_linear}
    Suppose for all $c \geq 0$, all $\bn=(n_1,n_2)$ the model has the Stackelberg independence property. Then $P(X)-c = a \left(\oX_c-X \right)$ for some $a>0, \oX_c > 0$.
\end{proposition}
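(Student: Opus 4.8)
The plan is to work with a two-period game $\bn=(n_1,n_2)$ and extract an ordinary differential equation for $P$ from the requirement that the leaders' equilibrium quantities not depend on $n_2$. I would start in the last period: firms $i\in\prt_2$ observing cumulative leader quantity $X_1$ each solve $\max_{x_i} x_i(P(X_1+\sum_{j\in\prt_2}x_j)-c)$, so by symmetry each plays $x_i^*=g(X^*)$ where $g(X)=-(P(X)-c)/P'(X)$ and $X^*=X^*(X_1)$ is the unique total quantity solving $X^* = X_1 + n_2\,g(X^*)$. Differentiating this identity gives $\frac{dX^*}{dX_1} = \frac{1}{1-n_2 g'(X^*)}$, which I will need when I differentiate the leaders' objective. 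Each leader $i\in\prt_1$ then maximizes $x_i\bigl(P(X^*(X_1))-c\bigr)$ over $x_i$, taking the other leaders' quantities as given, and in a symmetric equilibrium $X_1 = n_1 x_i^*$.

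The key step is to write down the leader's first-order condition and impose that its solution $x_i^*$ is independent of $n_2$. Writing $\pi(X) = P(X)-c$, the leader's FOC is $\pi(X^*) + x_i^*\,\pi'(X^*)\frac{dX^*}{dX_1} = 0$, i.e.
\begin{equation*}
  \pi\bigl(X^*\bigr)\bigl(1 - n_2 g'(X^*)\bigr) + x_i^*\,\pi'\bigl(X^*\bigr) = 0.
\end{equation*}
Using $\pi = -g\pi'$ (the definition of $g$) this becomes $x_i^* = g(X^*)\bigl(1 - n_2 g'(X^*)\bigr)$, and combined with $X^* = n_1 x_i^* + n_2 g(X^*)$ we get the equilibrium condition
\begin{equation*}
  X^* = n_1\,g(X^*)\bigl(1 - n_2 g'(X^*)\bigr) + n_2\,g(X^*)
      = (n_1+n_2)\,g(X^*) - n_1 n_2\,g'(X^*)g(X^*),
\end{equation*}
a one-variable equation pinning down $X^*$ as a function of $(n_1,n_2,c)$. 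Stackelberg independence says $x_i^* = g(X^*)(1-n_2 g'(X^*))$ must not vary with $n_2$. I would differentiate the pair (equilibrium condition, expression for $x_i^*$) with respect to $n_2$ at fixed $n_1,c$, set $\partial x_i^*/\partial n_2 = 0$, and eliminate $\partial X^*/\partial n_2$. This should collapse to a differential relation among $g, g', g''$ evaluated along the equilibrium locus — and since by varying $n_2$ (or, more cleanly, by the competitive-limit argument of \Cref{P:limits} together with varying $n_1$ and $c$) the point $X^*$ sweeps out a whole interval of $[0,\oX_c]$, the relation must hold identically in $X$. The cleanest route is: Stackelberg independence plus \Cref{C:limit} force $x_i^* = \oX_c/(1+n_1)$ for every $n_2$, in particular equal to its $n_2\to\infty$ limit; feeding $x_i^* = \oX_c/(1+n_1)$ and the limit $g'(X^*)\to -1$ back in, and letting $n_1$ and $c$ range, yields $g'(X)\equiv -1$ on $[0,\oX]$. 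But $g'\equiv -1$ means $\frac{d}{dX}\!\left(-\frac{\pi(X)}{\pi'(X)}\right) = -1$, i.e. $\pi\pi'' / (\pi')^2 = 0$, so $\pi''\equiv 0$, hence $\pi(X) = P(X)-c$ is affine; writing it as $a(\oX_c - X)$ and using $P(\oX_c)=c$ and strict monotonicity gives $a>0$, and $\oX_c>0$ for small $c$ gives the last constraint.

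The main obstacle is the bookkeeping needed to justify that the ODE for $g$ holds \emph{identically} rather than just at isolated equilibrium points: one must argue that as $(n_1,n_2,c)$ vary over their admissible ranges the equilibrium total quantity $X^*$ traces out a full subinterval of $[0,\oX]$ on which $g'(X)=-1$ is forced, and then invoke analyticity/continuity (the regularity conditions assumed on $P$) to extend the conclusion to all of $[0,\oX]$. A secondary technical point is checking that $g'(X^*)\to -1$ in the competitive limit — this is exactly the computation already displayed after \Cref{P:limits}, namely $g' \to -\bigl([P']^2 - \pi P''\bigr)/[P']^2$ evaluated at $\oX_c$ where $\pi=0$, which equals $-1$ — so that piece can be quoted directly. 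Everything else is the routine differentiation sketched above.
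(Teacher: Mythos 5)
Your mechanics are fine up to and including the equilibrium condition $X^* = (n_1+n_2)\,g(X^*) - n_1 n_2\, g'(X^*)g(X^*)$ and the final observation that $g'\equiv -1$ would force $P''\equiv 0$; the gap is the step that has to carry the whole proof, namely that Stackelberg independence yields $g'(X)\equiv -1$ identically. The limit $g'(X^*)\to -1$ as $n_2\to\infty$ is automatic for \emph{every} smooth demand function, linear or not --- it is exactly \cref{L:glimits}, since $g(\oX_c,c)=0$ forces $g'(\oX_c,c)=-1$ --- so "feeding the limit back in" carries no information that could distinguish linear from non-linear demand: it only constrains $g'$ at the one point $\oX_c$, where the constraint is vacuous. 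For finite $n_2\geq 1$ the SI requirement does couple $g$ and $g'$, but only at the endogenous equilibrium points $X^*(n_1,n_2,c)$: for fixed $c$ these form a discrete set whose location you cannot know without already knowing $P$, and as $c$ varies you are constraining a \emph{different} function $g(\cdot,c)=-(P(\cdot)-c)/P'(\cdot)$ at each point, so "letting $n_1$ and $c$ range" does not produce the pointwise identity $g'(X,c)=-1$ on an interval. The alternative you sketch (differentiate the system in $n_2$ and eliminate $\partial X^*/\partial n_2$) has the same problem, plus the fact that $n_2$ is an integer --- SI is assumed only for integer sequences, so there is no derivative in $n_2$ to take, and for fixed $(n_1,c)$ the equilibrium points do not sweep out an interval. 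You flag the "identically vs.\ at isolated points" issue yourself, but it is not bookkeeping; it is the missing argument.

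The paper's proof avoids derivative information about $g$ altogether. It uses only the $n_2=0$ comparison: SI together with \cref{P:limits} (i.e.\ \cref{C:limit}) forces the $n_1$-firm Cournot quantity to equal $\frac{n_1\oX_c}{1+n_1}$ for every $n_1$ and every $c\geq 0$, which is the pure \emph{value} condition $g\bigl(\tfrac{n_1\oX_c}{1+n_1},c\bigr)=\tfrac{\oX_c}{1+n_1}$, see \cref{E:gequilibrium}. Fixing a target point $X$ and choosing, for each $n_1$, the cost $c'$ that makes the Cournot quantity equal to $X$, this two-parameter family collapses to the functional equation $P(2X)-P(X)=n_1\bigl[P\bigl(X+\tfrac{X}{n_1}\bigr)-P(X)\bigr]$ (\cref{E:n1eq}); combining $n_1=2$ and $n_1=3$ gives $P(2X)-P(X)=2[P(X)-P(X/2)]$, letting $n_1\to\infty$ gives $P'(X)=\tfrac{1}{X}[P(2X)-P(X)]$, and iterating the halving identity down dyadic scales yields $P'(X)=P'(0)$ on $(0,\oX/2]$ and then on all of $(0,\oX)$. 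Note this route needs only $P\in C^1$, whereas your shortcut needs $P''$. If you want to rescue your approach, replace the "$g'\equiv -1$" claim with an argument of this kind that tracks how the constrained points move as $c$ varies; the $n_2=0$ value conditions alone suffice, and the finite-$n_2$ first-order conditions are not needed.
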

The proof is in \cref{A:proofs}. Let me illustrate the key ideas of the proof. First, \cref{P:limits} gives a unique prediction of the leaders' total equilibrium quantity, $X_1^* = \frac{n_1 \oX_c}{1+n_1}$ for any $n_1 \geq 1$ and any $c \geq 0$, where $\oX_c = P^{-1}(c)$ is the competitive quantity. 
On the other hand, in $n_1$-player Cournot oligopoly the equilibrium quantity is characterized by the condition $X^* = n_1 g(X^*,c)$, where $g(X,c)=-\frac{P(X)-c}{P'(X)}$. According to the Stackelberg independence assumption, these two quantities must coincide, which gives a condition that relates $n_1,c$ and the demand function through $g(X,c)$ function and $\oX_c$,
\[
  \frac{n_1 \oX_c}{1+n_1} = g\left(\frac{n_1 \oX_c}{1+n_1},c\right).
\]
The rest of the proof uses this condition to identify the shape of the demand function $P(X)$. It shows that at any point $X \in \left(0,\oX \right)$, its derivative $P'(X)=P'(0)$. As $P$ is assumed to be continuously differentiable, it implies that $P$ is indeed linear.

\paragraph{Mathematical Intuition.} Let me also discuss the mathematical intuition of this result more formally. \cite{hinnosaar-osc-arxiv} provides a general characterization for this type of sequential games with non-quadratic payoffs. The total equilibrium quantity $X^*$ is defined by equation
\[
  X^* 
  =\sum_{k=1}^T S_k(\bn) g_k(X^*),
\]
where $S_1(\bn),\dots,S_T(\bn)$ are integers that capture the informativeness of the game $\bn=(n_1,\dots,n_T)$ and $g_1,\dots,g_T$ are defined recursively as 
\[
  g_1(X) = g(X)=-\frac{P(X)-c}{P'(X)},
  \;\;\;
  g_{k+1}(X) = -g_k'(X) g(X), \forall k=1,\dots,T-1.
\]
The $g_k(X^*)$ terms capture the higher-order strategic substitutability of quantities. In particular, $g_1(X^*)$ captures the fact that increasing the quantity slightly reduces the marginal benefit of firms' own quantity. It is weighted by $S_1(\bn)=n$, i.e.\ the number of firms. Next, $g_2(X^*)$ captures the fact that increase in the leader's quantity reduces the marginal benefits for all its followers, therefore discouraging them. This term is multiplied by $S_2(\bn)$, which counts the number of all leader-follower pairs in the model. The other terms capture the same idea, but at a higher order. For example, $g_3(X^*)$ captures the fact if firm $i$ increases its quantity and firm $j$ observes this and responds, then it affects the benefits for all $j$'s followers. Therefore $i$ also has two-step indirect influences. Again, each such term $g_k(X^*)$ is weighted by the total number of $k$-step paths in the model. Each additional follower typically adds influences on all levels.

The equilibrium quantity of firm $i$ arriving in period $t$ is
\[
  x_i^* 
  = \left[ 1 - \sum_{k=1}^{T-t} S_k(\bn^t) g_k'(X^*) \right] g(X^*)
  = g_1(X^*) + \sum_{k=2}^{T+1-t} S_{k-1}(\bn^t) g_k(X^*),
\]
where $S_1(\bn^t),\dots,S_{T-t}(\bn^t)$ capture the informativeness of the remainder game after the move of player $i$, i.e.\ $\bn^t = (n_{t+1},\dots,n_T)$, and $g_1,\dots,g_{T-t}$ are defined as above, with the same interpretation.

Each additional follower increases the total quantity $X^*$, which therefore reduces firm $i$'s incentive to increase quantity. This effect works by reducing $g_k(X^*)$ terms, which are strictly decreasing in the case of higher-order strategic substitutes. However, it also increases informativeness of $\bn^t$, which means that firm $i$ influences more firms. This increases $x_i^*$ directly. Depending on the demand function, the comparison of these two opposite effects can go in either direction. 

In the case of linear demand, these two effects are exactly equal. Namely, if $P(X)=a \left( \oX-X \right)$, then $g(X) 
= \oX_c - X$, where $\oX_c=\oX-\frac{c}{a}>0$, and therefore each $g_k(X) = \oX_c-X$. Combining these, we get that
\[
  X^* = (\oX_c-X^*) \sum_{k=1}^T S_k(\bn)
  \Rightarrow
  \oX_c-X^*
  = \frac{\oX_c}{1 + \sum_{k=1}^T S_k(\bn)}
  = \frac{\oX_c}{\prod_{k=1}^T (1+n_k)}
\]
and
\[
  x_i^*
  = (\oX_c-X^*) \left[1 + \sum_{k=1}^{T-t} S_k(\bn^t) \right]
  = \frac{\oX_c}{\prod_{k=1}^T (1+n_k)} \prod_{k=t+1}^T (1+n_k)
  = \frac{\oX_c}{\prod_{k=1}^t (1+n_k)},
\]
which are the same expressions we derived above and are indeed independent of the number of followers $\bn^t=(n_{t+1},\dots,n_T)$. However, note that the fact that the terms involving $\bn^t$ canceled out in the expression relied on the fact that all the information measures $S_k(\bn)$ had an equal weight $g_k(X^*)=\oX_c-X^*$, which means that all direct and indirect substitutability effects were equal. This is not the case for non-linear demand functions or other deviations from the standard linear model, therefore there is no reason to expect these terms to vanish.

\subsection{Example with a Non-Linear Function} \label{SS:example_nonlinear}

The following example shows that even small deviations from the standard model may make the leaders' actions completely uninformative. 
Consider the same example as in \cref{SS:linear}, but with a small modification in the demand function. Instead of a linear function, let the demand be $P(X) = a \left( \oX - X \right) - \varepsilon \sin(k \pi X)$, where $k \in \N$, $\pi \approx 3.14159$, and $\varepsilon>0$ is sufficiently small so that the regularity conditions are satisfied (the demand is still linear and the equilibrium interior).
By construction, the competitive quantity is still $\oX_c$ and the new demand function differs from the original linear demand function at most by $\varepsilon$. Therefore by taking a small $\varepsilon$, we can closely approximate the original function. On the other hand, by increasing $k>0$, we can increase the first- and second-order derivatives of $P$, which play an important role in equilibrium characterization.

The total equilibrium quantity $X^*$ is given by \cref{E:eq1n} and the leader's quantity by \cref{E:leaderquantity}. To see how non-linearity changes the result, let us consider $k=4$ and two values $\varepsilon = \pm 0.023 a$. This means that we consider two small deviations from the linear demand curve that are still visible on \cref{F:nonlinear_a}.
\begin{figure}[!ht]%
    \centering
    \subfloat[Demand function]{\includegraphics[width=0.45\textwidth,trim={10px 10px 15px 10px},clip]{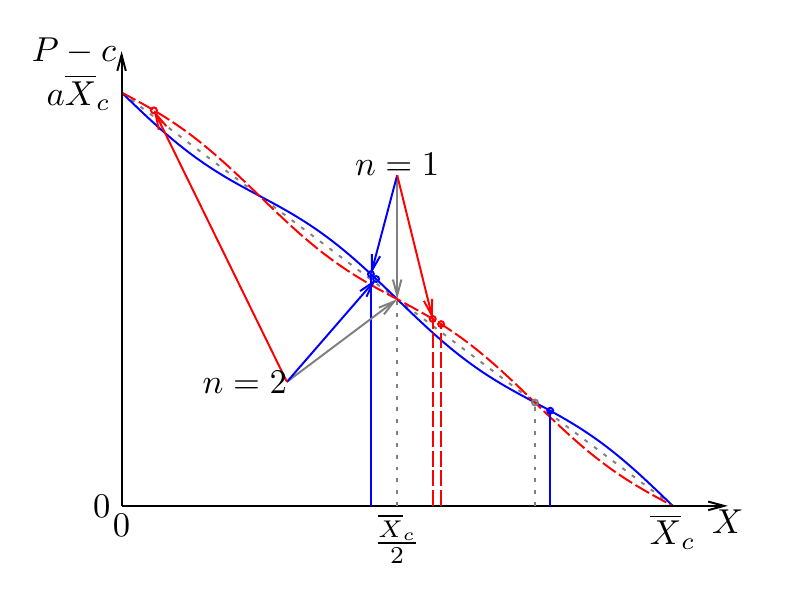} \label{F:nonlinear_a}}
    \qquad
    \subfloat[Equilibrium quantities]{\includegraphics[width=0.45\textwidth,trim={5px 12px 17px 10px},clip]{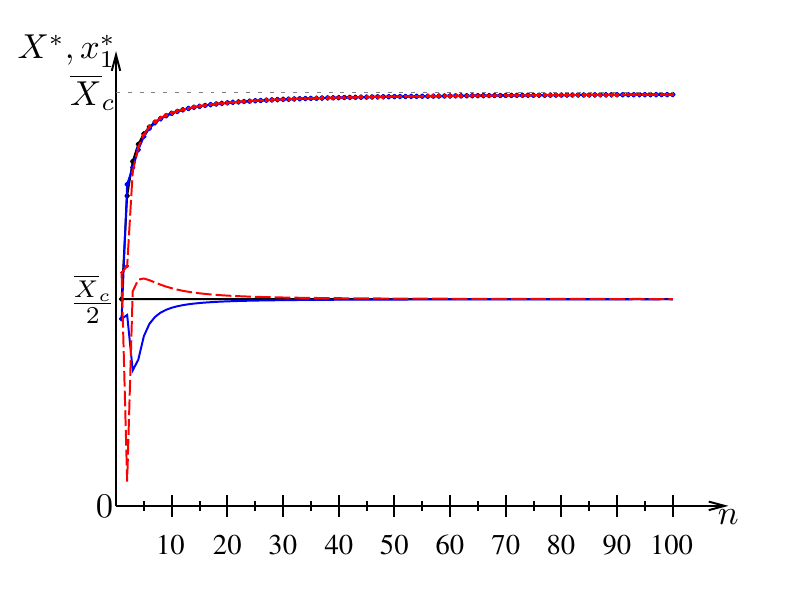} \label{F:nonlinear_b}}%
    \caption{Equilibria with demand functions $P(X)=a\left(\oX-X \right) \pm 0.023 a \sin(5 \pi X)$ in Stackelberg model with one leader and $n-1$ followers. \label{F:nonlinear}}%
\end{figure}

\Cref{F:nonlinear_a} illustrates why the conclusions differ in the case of non-linear demand. First, if $n=1$, i.e.\ the leader is a monopolist, then near the original monopoly quantity $\frac{\oX_c}{2}$, the slopes of the two non-linear demand curves differ. When $\varepsilon < 0$ (the solid blue line) the slope near $\frac{\oX_c}{2}$ is steeper than $-1$, therefore the monopoly quantity is now lower than $0.5$. On the other hand, when $\varepsilon > 0$ (the dashed red line) the curve is less steep than the original demand curve, which pushes the monopoly quantity towards the right. These three monopoly quantities are denoted by vertical lines in the middle of \cref{F:nonlinear_a}. As we can see, relatively small differences in the demand curves lead to visible numerical differences in monopoly quantities.

Next, if there is one follower ($n=2$), then the total equilibrium quantity is closer to the competitive quantity $\oX_c$ (the three vertical lines towards the right). Near the original equilibrium quantity $\frac{3}{4} \oX_c$, the $\varepsilon<0$ case (the solid blue line) has a less elastic demand and thus higher total equilibrium quantity than the linear curve and $\varepsilon>0$ case (the dashed red line) has a more elastic demand and therefore lower total quantity. The leader's corresponding quantities for $\varepsilon<0$ and for $\varepsilon>0$ are now reversed in order and differ significantly.

\Cref{F:nonlinear_b} shows the leader's and the total equilibrium quantity as a function of the number of firms. It shows that the number of followers has a significant impact on the equilibrium behavior of the leader, so we do not have Stackelberg independence here. Moreover, the leader's equilibrium quantity is significantly higher or lower than $\frac{\oX_c}{2}$ even in the case of the same demand function. This means that now we cannot conclude that much from the leader's quantity $x_1^*$. Its connection with the competitive quantity $\oX_c$ depends on the leader's expectation about the number of followers and the particular shape of the demand function. Indeed, suppose that the observer knows that the demand function is one of the two non-linear curves indicated in \cref{F:nonlinear_a}, say with equal probabilities. Then for each $X$, the expected price is $P(X)=a \left(\oX_c-X\right)$, i.e.\ in expectation, the curve is linear, with an error term less than $0.005$. The total equilibrium quantity behaves as one would expect, converging to competitive quantity. However, the leader's quantity $x_1^*$ depends largely on the particular demand function and also the expected number of followers.

Finally, \cref{F:nonlinearsmall} describes the same calculations for $\varepsilon = 0.00025$ and $k=100$. As the figure illustrates, the demand curve is now virtually indistinguishable from the linear curve (having nevertheless sizable derivatives). Vertical lines in \Cref{F:nonlinearsmall_a} show that the leader's action when $n$ is either $1$, $2$, or $3$ differs significantly and \cref{F:nonlinearsmall_b} shows that the same pattern continues for larger $n$ and the convergence to $\frac{\oX_c}{2}$ is slow as $n \to \infty$. Therefore the leader's action can be uninformative and depends heavily on $n$ even when the demand function is very close to linear.
\begin{figure}[!ht]%
    \centering
    \subfloat[Demand function]{\includegraphics[width=0.45\textwidth,trim={10px 10px 15px 10px},clip]{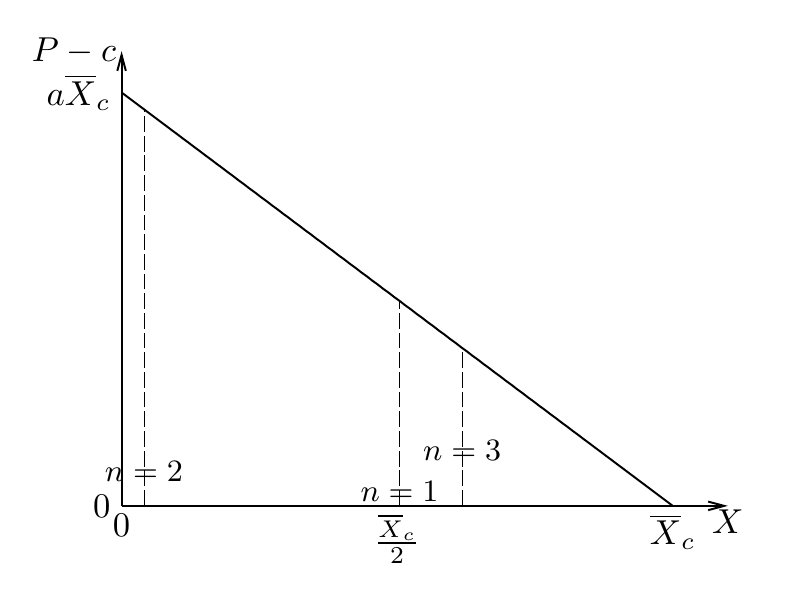} \label{F:nonlinearsmall_a}}%
    \qquad
    \subfloat[Equilibrium quantities]{\includegraphics[width=0.45\textwidth,trim={5px 12px 17px 10px},clip]{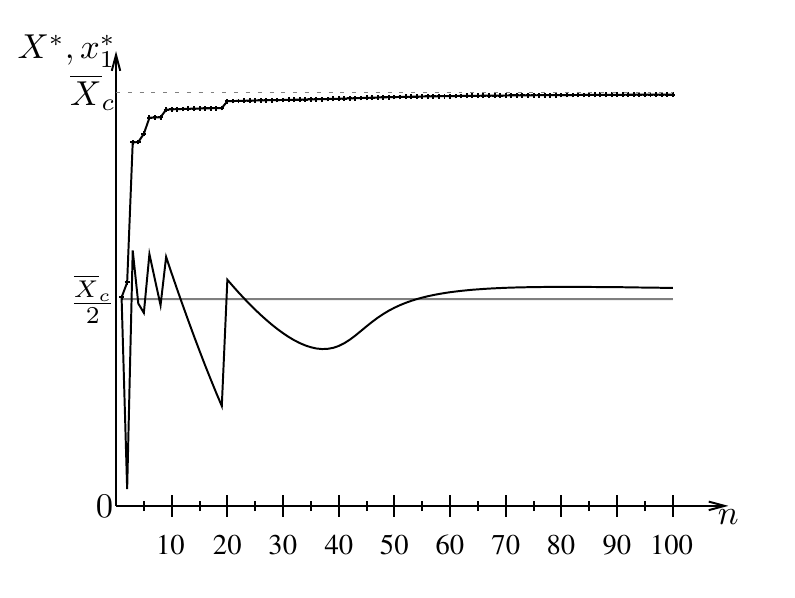} \label{F:nonlinearsmall_b}}%
    \caption{Equilibria with demand function $P(X)=a\left(\oX-X \right) - 0.00025 a \sin(100 \pi X)$ in Stackelberg model with one leader and $n-1$ followers. \label{F:nonlinearsmall}}%
\end{figure}

\subsection{Identical Firms}

The second main assumption of the standard model is that the firms are identical. In this section, I relax this assumption while keeping the other assumptions of the standard model unchanged. I assume that each firm $i$ may have a different constant marginal cost $c_i \geq 0$  and a different linear inverse demand function $P_i(X)=a_i \left(\oX^i - X \right)$. The latter captures the possibility that the firms may operate under different tax rules or have different incentive structures for the decision-makers. 
Under these assumptions, we can rewrite $P_i(X)-c_i = a_i \left(\oX_c^i - X \right)$, where $\oX_c^i = \oX^i - \frac{c_i}{a_i}$ is the total quantity at which firm $i$ would earn zero profit. 
Note that $a_i$ affects the payoff multiplicatively and therefore does not affect the equilibrium behavior. The only relevant parameter is, therefore, $\oX_c^i$. As above, I assume parameters $\oX_c^i$ are commonly known and for simplicity, I focus on the deterministic arrival processes. Moreover, I assume that the differences in $\oX_c^i$ are sufficiently small, so that in equilibrium all firms choose an interior solution to their maximization problem.\footnote{If $
\oX_c^i \gg \oX_c^j$, for example, because $c_j \gg c_i$, then it is to be expected that there could be equilibria where $i$ wants to deter $j$'s entry by choosing a quantity that makes entry unprofitable. In this case, Stackelberg independence is clearly not satisfied, as without $j$'s existence $i$ would choose a lower quantity.}

For a formal statement, I use Stackelberg independence in a specific comparison. The statement requires that for arbitrary sequential oligopoly $\bn$ (with linear payoffs as described above) adding one follower at the end does not change any of the equilibrium choices of previously existing firms. In particular, if the relevant parameter for the added firm is $\oX_c$, then for all its leaders $\oX_c^i = \oX_c$ is necessary for the equilibrium behavior to be unchanged.

\begin{proposition}[Identical Firms are Necessary] \label{P:necessary_identical}
  Suppose that the equilibrium quantities in sequential oligopoly $\bn = (n_1,\dots,n_T)$ with parameters $(\oX_c^i)_{i=1}^n$ are $\bx^* = (x_1^*,\dots,x_n^*)$
  and 
  these quantities remain the same when adding a firm with parameter $\oX_c$ on period $T + 1$. Then $\oX_c^i = \oX_c$ for all firms.
\end{proposition}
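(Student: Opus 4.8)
The plan is to prove the stronger claim that $\oX_c^i = \oX_c$ for \emph{every} firm $i$, by backward induction on the period $t$ containing $i$. The engine of the argument is a comparison, for each firm $i$, of its first-order condition in the original game $\bn$ with the corresponding condition in the augmented game $\bn' = (n_1,\dots,n_T,1)$. Two observations make this comparison sharp. First, since the differences in the $\oX_c^i$ are assumed small, every equilibrium is interior, and each payoff $a_i x_i(\oX_c^i - X)$ is a concave parabola in $x_i$, so the first-order condition characterizes the optimum. Second, by hypothesis all firms in periods $1,\dots,T$ play the same quantities in $\bn'$ as in $\bn$, so every cumulative $X_s$ with $s \le T$ — in particular the total $X_T$ of the game $\bn$ — takes the same value in both games; this is what lets me cancel terms across the two first-order conditions.

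For the inductive step, fix $t$ and suppose $\oX_c^j = \oX_c$ for all firms in periods $t+1,\dots,T$ (this hypothesis is empty when $t = T$, so the step also settles the base case). Then, conditional on the leaders having produced cumulative $X_t$, the continuation game among the followers in periods $t+1,\dots,T$ is — up to the payoff-irrelevant positive scalars $a_j$, which leave every best response unchanged — precisely the standard model of \cref{S:linearmodel} played with common parameter $\oX_c$ and effective market size $\oX_c - X_t$. Hence \cref{P:standard} applies and the induced final total is the affine map $X^* = \bigl(1 - \tfrac1{\Pi_t}\bigr)\oX_c + \tfrac1{\Pi_t}X_t$, with $\Pi_t = \prod_{s=t+1}^{T}(1+n_s)$ (the empty product $1$ when $t = T$). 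In $\bn'$ the same reasoning applied to periods $t+1,\dots,T+1$ — the appended firm also carries parameter $\oX_c$ — gives $\hat{X}^* = \bigl(1 - \tfrac1{2\Pi_t}\bigr)\oX_c + \tfrac1{2\Pi_t}X_t$. Substituting these affine maps into the problem of firm $i \in \prt_t$ and solving its first-order condition yields $x_i^* = \Pi_t\oX_c^i - (\Pi_t - 1)\oX_c - X_t$ in $\bn$ and $x_i^* = 2\Pi_t\oX_c^i - (2\Pi_t - 1)\oX_c - X_t$ in $\bn'$. Since $x_i^*$ and $X_t$ agree across the two games, subtracting leaves $\Pi_t(\oX_c^i - \oX_c) = 0$, so $\oX_c^i = \oX_c$. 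This closes the induction and hence the proof.

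The first-order-condition computations and the affine bookkeeping are routine. The step that requires care, and where I would concentrate, is the appeal to \cref{P:standard} inside the induction: one must note that \cref{P:standard} is really a statement about residual demand and therefore applies verbatim to a subgame starting from a positive cumulative $X_t$ (equivalently, to the standard model with market size $\oX_c - X_t$), and that heterogeneous demand slopes $a_j$ among the followers are irrelevant because they enter payoffs multiplicatively. A small but essential point is that appending the follower strictly changes the coefficient on $X_t$ in the induced demand — it halves, since $\Pi_t$ becomes $2\Pi_t$ — for otherwise the two first-order conditions would be identical and the comparison vacuous; it is exactly this strict change that makes the extra follower informative enough to pin down each $\oX_c^i$.
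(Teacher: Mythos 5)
Your proposal is correct and follows essentially the same route as the paper's proof: a backward induction in which, under the hypothesis that all later firms have parameter $\oX_c$, the continuation induces an affine map $X^*(X_t)=\oX_c-\frac{\oX_c-X_t}{\prod_{s>t}(1+n_s)}$, and the period-$t$ first-order condition $x_i^*=\oX_c-X_t+\prod_{s>t}(1+n_s)(\oX_c^i-\oX_c)$ compared across the games with and without the appended firm forces $\oX_c^i=\oX_c$. The only (cosmetic) difference is that you obtain the affine continuation map by invoking \cref{P:standard} at the residual market size $\oX_c-X_t$, whereas the paper carries that map as part of the induction hypothesis and re-derives it within the proof.
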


The proof is in \cref{A:proofs}.
To illustrate the argument, let us compare a monopoly and a two-firm Stackelberg model. The monopolist maximizes $x_1 a_1 \left( \oX_c^1 - x_1 \right)$ and the monopoly quantity is $x_1^* = \frac{\oX_c^1}{2}$. 
With two sequential oligopolists, the follower's best-response function is $x_2^*(x_1) = \frac{1}{2} \left[ \oX_c - x_1 \right]$. Therefore the leader maximizes
\[
  \max_{x_1} x_1 \left(
    \oX_c^1 - x_1 - x_2^*(x_1)
  \right)
  = \frac{1}{2} \max_{x_1} x_1 \left(
    \oX_c^1-x_1 + \oX_c^1- \oX_c
  \right).
\]
This problem is equivalent to the monopoly problem and gives the same solution as the monopoly problem if and only if the last two terms cancel out, i.e.\ $\oX_c^1 = \oX_c$.

\subsection{No Other Quadratic Payoffs} \label{SS:quadratic}

The remaining assumptions of the standard model are about externalities and non-constant marginal costs. I address these issues by allowing the payoff function of each firm to be a symmetric but more general quadratic function in the following form\footnote{For a discussion about the use and foundations of quadratic games, see \cite{lambert-martini-ostrovsky-2017}.}
\begin{equation} \label{E:quadraticpayoff}
    \pi_i(\bx) 
      = \alpha_0 
      + 
      \alpha_1 x_i
      - \frac{\alpha_2}{2} x_i^2
      + \beta_1 \sum_{j \neq i} x_j
      - \beta_2 \sum_{j \neq i} x_i x_j.
\end{equation}
Let me first give a few comments about this class of functions. First, under the assumption that the equilibrium is interior (i.e.\ satisfying the first-order optimality conditions), the parameter $\alpha_0$ is irrelevant.
Second, this formulation allows quadratic costs, where the marginal cost is either linearly increasing or decreasing. Third, the formulation makes it possible to study oligopolies with differentiated products, where the inverse demand function of firm $i$ is $P_i(\bx)= a (\oX - x_i) - b \sum_{j \neq i} x_j$. When $b=a$, the products are homogeneous (i.e.\ perfect substitutes), if $b < a$ they are imperfect substitutes, and if $b<0$ they are complements. Finally, if $\beta_1 \neq 0$ then there are direct payoff externalities.
The following \cref{P:necessary_quadratic} shows that all these extensions would violate Stackelberg independence.

\begin{proposition}[No Other Quadratic Payoffs] \label{P:necessary_quadratic}
  Suppose that the Stackelberg independence property is non-trivially satisfied for all $\bn=(n_1,n_2)$ and the payoff of player $i$ is given by \cref{E:quadraticpayoff}.
  Then $\alpha_2 = 2 \beta_2$ and $\beta_1=0$ and therefore we can express $\pi_i(\bx)=x_i a \left(\oX_c-X\right)$.
\end{proposition}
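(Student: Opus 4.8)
The plan is to squeeze everything out of one clean consequence of the hypothesis: applied to a two-period game $\bn=(n_1,n_2)$ at $t=1$ with the truncated sequence $\hbn=(n_1)$, the Stackelberg independence property says that every period-$1$ firm must play exactly its $n_1$-firm Cournot quantity, and this must hold for every $n_2$. I will write both sides of this identity explicitly as functions of $(\alpha_1,\alpha_2,\beta_1,\beta_2)$ and read off the restrictions. Throughout I use the non-triviality qualifier to assume $\alpha_1\neq 0$ (otherwise all equilibrium quantities are zero and the property is vacuous, yet holds for parameters with $\alpha_2\neq 2\beta_2$) and $\beta_2\neq 0$ (otherwise a follower's choice does not react to a leader, the sequential and simultaneous problems coincide term-by-term, and again the property holds trivially for parameters with $\alpha_2\neq 2\beta_2$); concavity of $\pi_i$ gives $\alpha_2>0$, and interiority of the monopoly benchmark forces $\alpha_1>0$.

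It suffices to take $n_2=1$. In $\bn=(n_1,1)$, the single follower's first-order condition gives the affine best response $x_{n_1+1}=\frac{\alpha_1-\beta_2 X_L}{\alpha_2}$, where $X_L$ is the sum of the $n_1$ leaders' quantities, so $\partial x_{n_1+1}/\partial x_i=-\beta_2/\alpha_2$ for each leader $i$. I then substitute this into leader $i$'s payoff and differentiate in $x_i$, keeping the other leaders' quantities fixed; here the externality term $\beta_1\sum_{j\neq i}x_j$ contributes $\beta_1\cdot(-\beta_2/\alpha_2)$, which is precisely the channel through which $\beta_1$ enters a sequential game though it is irrelevant simultaneously. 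Imposing symmetry among the $n_1$ leaders yields a closed-form common leader quantity
$y(n_1)=\frac{\alpha_1\alpha_2-\beta_1\beta_2-\alpha_1\beta_2}{\alpha_2^2+(n_1-1)\alpha_2\beta_2-(n_1+1)\beta_2^2}$,
while the $n_1$-firm Cournot quantity is $y^C(n_1)=\frac{\alpha_1}{\alpha_2+(n_1-1)\beta_2}$.

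Next I set $y(n_1)=y^C(n_1)$, clear denominators, cancel the $\alpha_1\alpha_2^2$ and $(n_1-1)\alpha_1\alpha_2\beta_2$ terms that appear on both sides, and divide by $\beta_2\neq 0$; what remains simplifies to the identity $-\beta_1\bigl(\alpha_2+(n_1-1)\beta_2\bigr)=\alpha_1(\alpha_2-2\beta_2)$, required for every $n_1\geq 1$. The left side is affine in $n_1$ with slope $-\beta_1\beta_2$ and the right side is constant, so comparing $n_1=1$ with $n_1=2$ forces $\beta_1\beta_2=0$, hence $\beta_1=0$; feeding this back into the $n_1=1$ instance gives $\alpha_1(\alpha_2-2\beta_2)=0$, hence $\alpha_2=2\beta_2$ since $\alpha_1\neq0$. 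Substituting $\beta_1=0$ and $\alpha_2=2\beta_2$ into \cref{E:quadraticpayoff} gives $\pi_i(\bx)=\alpha_0+\alpha_1 x_i-\beta_2 x_i^2-\beta_2 x_i\sum_{j\neq i}x_j=\alpha_0+x_i(\alpha_1-\beta_2 X)$; dropping the irrelevant constant $\alpha_0$ and writing $a:=\beta_2>0$ and $\oX_c:=\alpha_1/\beta_2>0$ yields $\pi_i(\bx)=x_i\,a(\oX_c-X)$, as claimed.

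The conceptual crux is light: it is just the observation that ``independence of the number of followers'' applied to a two-period game, together with the fact that ``zero followers'' is an admissible comparison, pins the leaders down to Cournot play, after which everything is algebra. The main obstacle is the bookkeeping behind $y(n_1)$ — one must track carefully which terms in leader $i$'s payoff depend on $x_i$ directly, which depend on it only through the follower's reaction, and which (the rival leaders' quantities) are held fixed, and then perform the symmetry substitution without sign errors. A secondary point worth making explicit in the write-up is the role of ``non-trivially'': the degenerate cases $\alpha_1=0$ and $\beta_2=0$ genuinely must be excluded, since in each of them the property holds for parameter values with $\alpha_2\neq 2\beta_2$.
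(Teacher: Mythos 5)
Your proposal is correct and follows essentially the paper's own route: compute the followers' linear reaction, solve the leaders' first-order conditions in the two-period quadratic game, impose that the leaders' play is independent of the number of followers (equivalently, equals the $n_1$-firm Cournot outcome), and read off $\beta_1\beta_2=0$ and $\alpha_1(\alpha_2-2\beta_2)=0$. The only differences are bookkeeping: you fix $n_2=1$ and vary $n_1$ to separate the two conditions (the paper instead keeps $n_2$ general and kills the coefficients on $n_2$ and $n_1 n_2$), and you fold $\beta_2\neq 0$ into the non-triviality assumption up front rather than deriving it at the end, both of which are fine.
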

\begin{proof}
  For a fixed $X_1$, combining the first-order optimality conditions of all firms in period $2$ gives the total quantity as a function of $X_1$, which is
  \[
    X^*(X_1)
    =\frac{n_2 \alpha_1 + (\alpha_2-\beta_2) X_1}{n_2 \beta_2 + (\alpha_2-\beta_2)}.
  \]
  Inserting this into the optimization problem of the leaders and solving for the equilibrium conditions gives their total equilibrium quantity
  \begin{equation}
    X_1^*
      = \frac{
        \alpha_1
        (\alpha_2-\beta_2)
        n_1
          - \alpha_1 (
              \alpha_2
              - 2 \beta_2
            )
         n_2
        - \beta_1 \beta_2
        n_1 n_2
        }{
      \left(
        \alpha_2
        - \beta_2
        + n_1 \beta_2
      \right) 
       (\alpha_2-\beta_2)
        }  
  \end{equation}
  Non-trivial Stackelberg independence requires that this expression is independent on $n_2$ for all $n_1$, but is not always $0$. The non-triviality requires that $\alpha_1\neq 0$ and $\alpha_2 \neq \beta_2$. Requirement that $\alpha_1 (\alpha_2 - 2 \beta_2) = 0$ implies therefore that $\alpha_2 = 2 \beta_2$. Finally, $\beta_1 \beta_2 = 0$ implies that either $\beta_1 =0$ or $\beta_2=0$, but the previous two observations exclude the possibility that $\beta_2 = 0$.
  Therefore indeed $\beta_1=0$ and $\alpha_2 = 2 \beta_2$. Inserting this into \cref{E:quadraticpayoff} and noting that $\alpha_0$ does not affect interior equilibria gives the representation
    \[ 
      \pi_i(\bx)
      =
      \alpha_1 x_i
      - \beta_2 x_i^2
      - \beta_2 \sum_{j \neq i} x_i x_j
      =
      x_i \beta_2 \left( 
        \frac{\alpha_1}{\beta_2}
        -  X
      \right).
    \]
\end{proof}

\section{Discussion} \label{S:discussion}

This paper studies the standard model of sequential capacity choices. Standard assumptions are often made for tractability and not necessarily because of empirical validity: firms are identical, demand is linear, marginal costs are constant, and there are no externalities. I show that in this standard model, leaders' actions are informative about the markets due to the Stackelberg independence property. Just by observing a single entrant, an observer can deduce the competitive quantity, and it is easy to construct a good welfare measure just by observing the equilibrium quantity choices. Moreover, under the standard assumptions, these arguments are independent of the arrival process and even firms' beliefs about the arrival process.

The second part of the paper bears negative results. Namely, it shows that all the assumptions of the standard model are necessary for the conclusions. Moreover, an example shows that even small deviations from standard assumptions may lead to large changes in behavior, making the leaders' choices uninformative about the market conditions. Therefore, one should be careful with making the standard assumptions just for tractability.

These results highlight that the standard assumptions used in the literature cover only the knife-edge case where different incentives balance out. An additional follower increases the equilibrium quantity and therefore reduces the incentive to choose high quantities for all firms, whereas having more followers motivates leaders to raise their quantities to discourage followers from raising theirs. These effects cancel each other out only when the demand is linear. Similarly, if the follower has a higher or lower cost than the leader or if the goods are non-homogeneous or there are externalities, then these effects do not cancel out even in the case of linear net demand functions.

I did not discuss a few other standard assumptions that can be relaxed and would also affect Stackelberg independence. I maintained the assumption that the equilibrium is interior, which requires that there are no fixed costs (or they are small) and firms do not differ much. However, if the fixed costs are large or if the differences between firms' payoffs are significant, then entry and entry deterrence become strategic questions. Of course, this would make Stackelberg independence even less likely to hold.
Similarly, I assumed that there is common knowledge about firms' payoffs and other model characteristics. Relaxing this is also possible and one implication would be that Stackelberg independence continues to hold with interim-identical firms, i.e.\ firms that may differ in realization, but at the moment of their decision, they expect followers to be similar to them.

\bibliography{toomash-stackelberg}
\bibliographystyle{econometrica}

\appendix

\section{Proofs} \label{A:proofs}

\subsection{Proof of \texorpdfstring{\cref{P:standard}}{proposition 1}}

\begin{proof}
  Firm $i$ in the last period $T$ observes $X_{T-1}$ and knows both $n_T$ and the fact that there are no followers. Therefore its maximization problem is
  \[
    \max_{x_i} x_i a \left( \oX_c - X \right)
    \;\;
    \Rightarrow
    \;\;
    x_i^* = \oX_c - X^*(X_{T-1}),
  \]
  where $X^*(X_{T-1})$ is the total quantity induced by $X_{T-1}$ if all firms in period $T$ behave optimally.
  Combining all the optimality constraints gives us
  \[
    \sum_{i \in \prt_T} x_i^* = X^*(X_{T-1}) - X_{T-1} = n_T \left( \oX_c - X^*(X_{T-1}) \right)
    \;\;
    \iff 
    \;\;
    X^*(X_{T-1})  = \frac{n_T \oX_c + X_{T-1}}{1+n_T}.
  \]
  Now take firm $i$ in period $T-1$. An important object in its maximization problem is $P(X^*(X_{T-1}))-c$, i.e.\ the per-unit realized profit, assuming that after its choice, the cumulative quantity is $X_{T-1}$ and followers behave optimally. Note that since the number of followers is random, firm $i$ takes expectation of this term according to its beliefs, i.e.
  \[
    \expect_i \left[ P(X^*(X_{T-1}))-c \right]
    =
    \expect_i a \left( \oX_c - X^*(X_{T-1})) \right)
    = 
    a \left( \oX_c - X_{T-1} \right)
    \expect_i \frac{1}{1+n_T}.
  \]
  Therefore the expected profit of firm $i$ is $x_i a \left( \oX_c - X_{T-1} \right) \expect_i \frac{1}{1+n_T}$, which is the same problem as if the game would end after period $T-1$.
  
  I prove the proposition by induction. Suppose that at period $t$, each player $i$ expects that cumulative quantity $X_t$ induces 
  $\expect_i \left[ P(X^*(X_t))-c \right]
  = a \left( \oX_c - X_t \right)
    \expect_i \frac{1}{\prod_{s=t+1}^T (1+n_s)}$ (note: we already verified this for $t=T$ and $t=T-1$). Then $i$ maximizes
    \[
      \max_{x_i}  \expect_i x_i \left[ P(X^*(X_t))-c \right]
      = 
      a \expect_i \frac{1}{\prod_{s=t+1}^T (1+n_s)} \max_{x_i} x_i \left( \oX_c - X_t \right).
    \]
    This is clearly independent on $\bn^t$. Combining optimality conditions $x_i^* = \oX_c-X_t^*$ leads to the cumulative equilibrium quantity after period $t$ induced by $X_{t-1}$, which I denote by $X_t^*(X_{t-1})$.
    \[
      \sum_{i \in \prt_t} x_i^* 
      = X_t^*(X_{t-1}) - X_{t-1}
      = \oX_c-X_t^*(X_{t-1})
      \;\;
      \iff
      \;\;
      X_t^*(X_{t-1}) 
      = \frac{n_t \oX_c + X_{t-1}}{1+n_t}.
    \]
    Taking the expectation from the perspective of firm $i \in \prt_{t-1}$ indeed gives
    \[
    \expect_j \left[
      P(X^*(X_t^*(X_{t-1}))-c
    \right]
    = 
    a \left( \oX_c - X_{t-1} \right) \expect_j \frac{1}{\prod_{k=t}^T (1+n_k)}
    .
  \]
  Using these results and the fact that cumulative quantity in the beginning of the game is $X_0=0$, we get that $X_1^*(0) = \frac{n_1 \oX_c}{1+n_1}$, therefore for each $i \in \prt_1$ the equilibrium quantity is $x_i^* = \frac{\oX_c}{1+n_1}$. Then the total quantity at the second period is $X_2^*(X_1^*(0)) - X_1^*(0) =
    \frac{n_2 \oX_c + X_1^*(0)}{1+n_2}
    - X_1^*(0)
    =
    \frac{n_2\oX_c}{(1+n_1)(1+n_2)}
  $ and therefore for each $i \in \prt_2$ we get $x_i^* = \frac{\oX_c}{(1+n_1)(1+n_2)}$.
  By the same argument, for each $i \in \prt_t$, we have $x_i^* = \frac{\oX_c}{\prod_{s=1}^t (1+n_s)}$ and
  \[
    X^* = X_T^*(X_{T-1}^*(\dots(X_1^*(0))\dots)) = \left[
    1 - \frac{1}{\prod_{s=1}^T (1+n_s)}
  \right] \oX_c.
  \]
\end{proof}

\subsection{Proof of \texorpdfstring{\cref{P:limits}}{proposition 2}}
\begin{proof}
  By \cite{hinnosaar-osc-arxiv}, the total equilibrium quantity is characterized by
  \begin{equation} \label{E:eqX_general}
    X^* 
    = \sum_{k=1}^T S_k(\bn) g_k(X^*),
  \end{equation}
  where $g_1,\dots,g_k$ are recursively defined as $g_1(X)=g(X)=-\frac{P(X)-c}{P'(X)}$ and $g_{k+1}(X)=-g_k'(X) g(X)$, and $S_k(\bn)$ denotes the number of level-$k$ observations\footnote{$S_1(\bn)$ is the number of players, $S_2(\bn)$ is the number of players observing other players, etc.} in game $\bn$.
  The equilibrium quantity of firm $i$ arriving in period $s$ is
  \begin{equation} \label{E:eqxi_general}
    x_i^*
    = f_s'(X^*) g(X^*)
    = g(X^*) \left[ 1 - \sum_{k=1}^{T-s} S_k(\bn^s) g_k'(X^*)
  \right]  
  \end{equation}
  \cite{hinnosaar-osc-arxiv} also shows that $\lim_{n_t \to \infty} X^* = \oX_c$. The following \cref{L:glimits} shows that the limits $g_k(\oX_c)=0$ and $g_k'(\oX_c)=-1$ for all $k$.
  \begin{lemma} \label{L:glimits}
    For all $k=1,\dots,T$, $g_k(\oX_c)=0$ and $g_k'(\oX_c)=-1$.
  \end{lemma}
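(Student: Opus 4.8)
The plan is to prove both identities simultaneously by induction on $k$, the underlying reason being that every $g_k$ carries a factor of $P(X)-c$ and hence vanishes at $\oX_c$. For the base case $k=1$, write $g_1(X)=g(X)=-\frac{P(X)-c}{P'(X)}$. Since $P(\oX_c)=c$ by definition of the competitive quantity and $P'(\oX_c)<0$ (the regularity conditions keep the equilibrium interior, so $P'$ is finite and strictly negative at $\oX_c$), the numerator is zero and $g_1(\oX_c)=0$. Differentiating the quotient gives $g_1'(X)=-1+\frac{[P(X)-c]\,P''(X)}{[P'(X)]^2}$; at $X=\oX_c$ the second term vanishes because $P(\oX_c)-c=0$, so $g_1'(\oX_c)=-1$.

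For the inductive step, assume $g_k(\oX_c)=0$ and $g_k'(\oX_c)=-1$. From the recursion $g_{k+1}=-g_k'\,g$ we get $g_{k+1}(\oX_c)=-g_k'(\oX_c)\,g(\oX_c)=-(-1)\cdot 0=0$. Differentiating, $g_{k+1}'(X)=-g_k''(X)\,g(X)-g_k'(X)\,g'(X)$, so at $\oX_c$ the first term is $-g_k''(\oX_c)\cdot 0=0$ and the second is $-(-1)(-1)=-1$, giving $g_{k+1}'(\oX_c)=-1$. This closes the induction over the finite range $1\le k\le T$.

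The only point requiring a word of care is that $g_k''(\oX_c)$ is a genuine finite number, so that the product $g_k''(\oX_c)\,g(\oX_c)$ appearing in the inductive step is honestly zero rather than an indeterminate form — this is precisely where smoothness of $P$ is used. Near $\oX_c$ we have $P'\neq 0$, so $g=-(P-c)/P'$ is a smooth ratio with non-vanishing denominator, and the recursion $g_{k+1}=-g_k'g$ preserves smoothness; hence every $g_k$ is $C^\infty$ in a neighbourhood of $\oX_c$. Equivalently, rewriting $g_{k+1}=-g_k'g_1=g_k'\cdot\frac{P-c}{P'}$ shows inductively that $g_k=(P-c)\,h_k$ for some $h_k$ smooth near $\oX_c$, which makes $g_k(\oX_c)=0$ transparent. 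Beyond this smoothness bookkeeping there is no real obstacle; the computation is routine, and the substantive content is simply that all the higher-order substitutability terms $g_k$ exhibit the same linear behaviour at the competitive quantity.
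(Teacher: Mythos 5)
Your proof is correct and follows essentially the same induction as the paper: the same base-case computation for $g_1$ and $g_1'$ at $\oX_c$, and the same inductive step using $g_{k+1}=-g_k'g$; your added remark on smoothness of the $g_k$ near $\oX_c$ just makes explicit what the paper's smoothness assumption on $P$ already guarantees.
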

  \begin{proof}
    Clearly $g_1(\oX_c)=g(\oX_c) = -\frac{P(\oX_c)-c}{P'(\oX_c)}=0$ and 
    \[
      g_1'(\oX_c)
      = g'(\oX_c) 
      = -\frac{[P'(\oX_c)]^2-[P(\oX_c)-c]P''(\oX_c)}{[P'(\oX_c)]^2}
      = -1 -g(\oX_c) \frac{P''(\oX_c)}{P'(\oX_c)} = -1.
    \]
    Suppose that $g_k(\oX_c)=0$ and $g_k'(\oX_c)=-1$. Then 
        \begin{align*}
          g_{k+1}(\oX_c)
          &= -g_k'(\oX_c) g(\oX_c) = 0 \\
          g_{k+1}'(\oX_c)
          &= 
          -g_k''(\oX_c) g(\oX_c)
          -
          g_k'(\oX_c) g'(\oX_c)
          = 0-(-1)(-1)=-1.
        \end{align*}
  \end{proof}
  
  Define $\bn_{-t}=(n_1,\dots,n_{t-1},n_{t+1},\dots,n_T)$, i.e.\ the sequence $\bn$ with $n_t$ left out. Note that $S_k(\bn)$ is the number level-$k$ observations in $\bn$, which can be computed by first taking all level-$k$ observations in the subsequence $\bn_{-t}$ and then adding the new observations involving $n_t$, of which there are $n_t$ times $S_{k-1}(\bn_{-t})$.\footnote{For notational convenience, $S_0(\cdot)$ is always $1$ and $S_T(\bn_{-t})=0$ as there cannot be any level-$T$ observations.}
  Taking the limit $n_t \to \infty$ from \cref{E:eqX_general} then leads to
  \begin{equation} \label{E:eqX_general_lim}
    \oX_c
    = \lim_{n_t \to \infty} \sum_{k=1}^T \left[S_k(\bn_{-t})+n_t S_{k-1}(\bn_{-t}) \right] g_k(X^*)
    = \sum_{k=1}^T  S_{k-1}(\bn_{-t}) \lim_{n_t \to \infty} n_t g(X^*),
  \end{equation}
  as $S_k(\bn_{-t})$ and $S_{k-1}(\bn_{-t})$ are independent of $n_t$ and thus finite integers, and $g_k(X^*) = -g_{k-1}'(X^*) g(X^*)$, where $\lim_{n_t \to \infty} g_{k-1}'(X^*)=-1$. 
 
  The next \cref{L:Sproductform} shows that we can rewrite the sum of the measures $S_k$ in a more convenient product form.
  \begin{lemma} \label{L:Sproductform}
    $1+\sum_{k=1}^T S_k(\bn) = \prod_{k=1}^T (1+n_k)$.
  \end{lemma}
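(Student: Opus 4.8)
The plan is to exploit the combinatorial meaning of $S_k(\bn)$: a level-$k$ observation is a chain of $k$ firms sitting in $k$ distinct periods $t_1 < \dots < t_k$, so $S_k(\bn) = \sum_{1 \le t_1 < \dots < t_k \le T} n_{t_1}\cdots n_{t_k}$ is exactly the $k$-th elementary symmetric polynomial in $(n_1,\dots,n_T)$. The identity to be proved is then the classical expansion $\prod_{k=1}^T(1+n_k) = \sum_{k=0}^T e_k(n_1,\dots,n_T)$, read with the convention $e_0 = S_0 = 1$; in principle one could simply invoke this. To keep the argument self-contained and to reuse exactly the recursion already recorded above — namely $S_k(\bn) = S_k(\bn_{-t}) + n_t S_{k-1}(\bn_{-t})$ together with $S_0(\cdot) = 1$ and $S_T(\bn_{-t}) = 0$ — I would instead give a short induction on $T$.

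The base case $T=1$ is immediate, since $1 + S_1(n_1) = 1 + n_1$. For the inductive step, set $\bn = (n_1,\dots,n_T)$ and $\bn' = (n_1,\dots,n_{T-1})$, apply the recursion with the omitted period being the last one, and split the sum:
\[
  1 + \sum_{k=1}^T S_k(\bn) = 1 + \sum_{k=1}^T S_k(\bn') + n_T \sum_{k=1}^T S_{k-1}(\bn').
\]
Using the boundary conventions, $S_T(\bn') = 0$ collapses the first sum to $\sum_{k=1}^{T-1} S_k(\bn')$, while reindexing $j = k-1$ with $S_0(\bn') = 1$ turns the second sum into $1 + \sum_{j=1}^{T-1} S_j(\bn')$. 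Hence the right-hand side factors as $\bigl(1 + \sum_{k=1}^{T-1} S_k(\bn')\bigr)(1+n_T)$, and the induction hypothesis gives $\prod_{k=1}^{T-1}(1+n_k)\cdot(1+n_T) = \prod_{k=1}^T(1+n_k)$, as desired.

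There is no substantive obstacle here; the only thing to watch is the index bookkeeping — making sure the degenerate terms $S_0 = 1$ and $S_T(\bn') = 0$ are used correctly so that the two sums recombine into a single shifted copy of $1 + \sum_k S_k(\bn')$. If one prefers to bypass the induction, the one-line alternative is to observe that $S_k$ is the $k$-th elementary symmetric function of $(n_1,\dots,n_T)$ and that expanding $\prod_{k=1}^T(1+n_k)$ directly produces $\sum_{k=0}^T S_k(\bn)$.
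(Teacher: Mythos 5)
Your induction is correct and is essentially the paper's own proof: the same induction on $T$, the same recursion $S_k(\bn)=S_k(\bn_{-T})+n_T S_{k-1}(\bn_{-T})$ with the conventions $S_0=1$ and $S_T(\bn_{-T})=0$, and the same recombination into $\bigl(1+\sum_{k=1}^{T-1}S_k(\bn_{-T})\bigr)(1+n_T)$. The remark that $S_k$ is the $k$-th elementary symmetric polynomial, so the identity is just the expansion of $\prod_k(1+n_k)$, is a valid shortcut but not needed.
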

  \begin{proof}
    Proof is again by induction. If $T=1$, then $1+\sum_{k=1}^1 S_k(\bn) = 1+S_1(n_1)=1+n_1$. Suppose the claim holds for $T-1$-period games. Then for $T$-period game $\bn=(n_1,\dots,n_T)$
    \[
      \sum_{k=1}^T S_k(\bn)
      = \sum_{k=1}^T [n_T S_{k-1}(\bn_{-T}) + S_k(\bn_{-T})]
      = \sum_{k=1}^T S_k(\bn_{-T})
      + n_T \sum_{k=1}^T S_{k-1}(\bn_{-T}).
    \]
    As $\bn_{-T}$ is a $T-1$-period game, $S_T(\bn_{-T})=0$ and the induction assumption gives us
    \[
      1+\sum_{k=1}^T S_k(\bn_{-T})
      = 1+\sum_{k=1}^{T-1} S_k(\bn_{-T}) 
      = \prod_{k=1}^{T-1}(1+n_k)
    \]
Also, $S_0(\bn_{-T})=1$, so
\[
  \sum_{k=1}^T S_{k-1}(\bn_{-T})
  = 1 + \sum_{k=2}^T S_{k-1}(\bn_{-T})
  = 1 + \sum_{k=1}^{T-1} S_k(\bn_{-T})
  = \prod_{k=1}^{T-1}(1+n_k).
\]
Combining these observations,
\[
  1+\sum_{k=1}^T S_k(\bn)
  = \prod_{k=1}^{T-1}(1+n_k)
  +n_T \prod_{k=1}^{T-1}(1+n_k)
  = 
  \prod_{k=1}^T (1+n_k)
  .
\]
  \end{proof}
Using the representation from \cref{L:Sproductform}, we can rewrite
\[
  \sum_{k=1}^T S_{k-1}(\bn_{-t})
  = 1+\sum_{k=1}^{T-1} S_k(\bn_{-t})
  = \frac{\prod_{k=1}^T (1+n_k)}{(1+n_t)}
\]
Inserting this expression to \cref{E:eqX_general_lim} gives
\begin{equation} \label{E:limntg}
  \lim_{n_t \to \infty} n_t g(X^*)
  = \frac{\oX_c}{\sum_{k=1}^T S_{k-1}(\bn_{-t})}
  = \frac{(1+n_t) \oX_c}{\prod_{k=1}^T (1+n_k)}.
\end{equation}
Take firm $i \in \prt_s$ in period $s$, whose equilibrium quantity is characterized by \cref{E:eqxi_general}. Taking the limit
\begin{equation} \label{E:limx1}
  \lim_{n_t \to \infty} x_i^*
  = g(\oX_c) - \sum_{k=1}^{T-s} g_k'(\oX_c) \lim_{n_t \to \infty} S_k(\bn^s)  g(X^*)
  = \sum_{k=1}^{T-s} \lim_{n_t \to \infty} S_k(\bn^s)  g(X^*).
\end{equation}
There are two cases. If $t \leq s$, then $n_t$ is not included in $\bn^s$, so each $S_k(\bn^s)$ is a finite integer and therefore $S_k(\bn^s) g(X^*)$ converges to $S_k(\bn^s) g(\oX_c) = 0$. Therefore $\lim_{n_t \to \infty} x_i^* = 0$. 
The second case is when $t > s$, which is the case when player $i$ belongs to a finite set of leaders and is followed by an infinite number of followers. Then we can rewrite \cref{E:limx1} as
\[
  \lim_{n_t \to \infty} x_i^*
  = \sum_{k=1}^{T-s} \lim_{n_t \to \infty} [n_t S_{k-1}(\bn_{-t}^s) + S_k(\bn_{-t}^s)]  g(X^*)
  = \sum_{k=1}^{T-s} S_{k-1}(\bn_{-t}^s) \lim_{n_t \to \infty} n_t g(X^*)   
  .
\]
Rewriting $\sum_{k=1}^{T-s} S_{k-1}(\bn_{-t}^s)$ using representation from \cref{L:Sproductform} and inserting limit from \cref{E:limntg} gives
\[
  \lim_{n_t \to \infty} x_i^*
  = \frac{\prod_{k=s+1}^T (1+n_k)}{(1+n_t)} \frac{(1+n_t) \oX_c}{\prod_{k=1}^T (1+n_k)}
  = \frac{\oX_c}{\prod_{k=1}^s (1+n_k)}
  .
\]
\end{proof}

\subsection{Proof of \texorpdfstring{\cref{P:necessary_linear}}{proposition 3}}

Let $\oX_c = P^{-1}(c)$ denote the competitive quantity, so that $P(\oX_c)=c$.
\begin{proof}
  Consider the case when $n_2=0$, i.e.\ there are $n_1 \geq 1$ firms who make a simultaneous choice. Each firm maximizes $\max_{x_i \geq 0} x_i [P(X)-c]$. The equilibrium is defined the first-order conditions for all firms
  \begin{equation*}
  P(X^*)-c
  x_i^* P'(X^*)
  =0
  \;\;\iff\;\;
  x_i^*
  =g(X^*,c),
  \end{equation*}
  where for brevity I denote $g(X,c) = -\frac{P(X)-c}{P'(X)}$. Adding up the conditions for all firms gives
  \begin{equation} \label{E:cournot}
  \sum_{i \in \prt_1} x_i^* 
  = X^*
  =n_1 g(X^*,c).
  \end{equation}
  By Stackelberg independence, the total quantity of the $n_1$ leaders must be the same for any $n_2$. \Cref{C:limit} shows that it must be equal to
  $X_1^* = \frac{n_1 \oX_c}{1+n_1}$. This gives a condition for $c\geq 0, n_1 \in \N$, 
  \begin{equation} \label{E:gequilibrium}
    \frac{n_1\oX_c}{1+n_1} = n_1 g\left(
    \frac{n_1 \oX_c}{1+n_1},c
    \right).
  \end{equation}
  From this, we can determine some properties of $g(X,c)$, which in turn identifies the properties of the demand function $P(X)$. 
  Fix any $X \in \left(0,\frac{\oX}{2} \right)$. By taking $c = P(2X)$ and $n_1$, we can ensure that $\oX_c = 2X$ and therefore \cref{E:gequilibrium} takes the form
  \begin{equation} \label{E:gequilibrium1}
    \frac{1 \oX_c}{1+1} = X = g(X,c).
  \end{equation}
  Now, take any $n_1 \geq 1$ and some $c'$. The total equilibrium quantity in $n_1$-player Cournot model must then be $\frac{n_1}{1+n_1} \oX_{c'} = \frac{n_1}{1+n_1} P^{-1}(c')$. Note that this is a continuous and monotone function of $c'$, which takes value $\frac{n_1}{1+n_1} \oX_c > \frac{\oX_c}{2} = X$ when $c' = c$ and $0$ when $c' \to \infty$. Therefore there exists $c'$ such that the equilibrium quantity is exactly $\frac{n_1}{1+n_1} \oX_{c'} = X$. The equilibrium condition \cref{E:cournot} is $X = n_1 g(X,c')$. Finally, note that by definition,
  \begin{align*}
  g(X,c)  &= -\frac{P(X)-c}{P'(X)} \Rightarrow P'(X) = -\frac{P(X)-c}{g(X,c)}, \\
  g(X,c') &= -\frac{P(X)-c'+c-c}{P'(X)}
  =g(X,c)\left[1-\frac{c'-c}{P(X)-c}\right]
  = X \left[1-\frac{c'-c}{P(X)-c}\right].
  \end{align*}
  Inserting this function and the values $c = P(\oX_c) = P(2X)$ and $c'=P(\oX_{c'}) = P\left( \frac{1+n_1}{n_1} X\right)$ to the equilibrium condition gives
\[
X 
= n_1 g(X,c')
= n_1 X \left[1-\frac{P\left( \frac{1+n_1}{n_1} X\right)-P(2X)}{P(X)-P(2X)}\right],    
\]
which is equivalent to
\begin{equation} \label{E:n1eq}
P(2X)-P(X)
=
n_1 \left[ P\left(X + \frac{X}{n_1}\right)-P(X) \right]
\end{equation}  
Suppose that $n_1 = 2$. Then applying \cref{E:n1eq} on $\frac{2}{3} X$ gives
\[
P\left( X \right)
= \frac{1}{2} P\left( \frac{2}{3} X \right) + \frac{1}{2} P\left( \frac{4}{3} X \right).
\]
Similarly, when $n_1=3$, the applying \cref{E:n1eq} on $\frac{3}{4} X$ gives
\[
P\left( X \right)
= 
\frac{2}{3} P\left( \frac{3}{4} X \right) + \frac{1}{3} P\left( \frac{3}{2} X \right)
\]
Combining the previous two equations we get
\begin{align}
P\left( 2 X \right) - P\left( X \right)
&=
2 \left[
P\left( X \right)
- 
P\left( \frac{1}{2} X \right)
\right]
. \label{E:halfs}
\end{align}
On the other hand, $n_1 \to \infty$ in \cref{E:n1eq} gives
\[
\lim_{n_1 \to \infty} \frac{P\left(X + \frac{X}{n_1}\right)-P(X)}{\frac{X}{n_1}}
= P'(X)
= \frac{1}{X} \left[ P(2X)-P(X) \right].
\]
Combining this with \cref{E:halfs}, we get
\begin{align*}
P'(X)
&= \frac{1}{X} \left[ P\left( 2X \right) - P\left( X \right) \right]
= \frac{2}{X} \left[
P\left( X \right) - P\left( \frac{X}{2} \right)
\right]
= \frac{2^2}{X} \left[
P\left( \frac{X}{2} \right) - P\left( \frac{X}{2^2} \right)	  
\right]
\\
&= \frac{2^k}{X} \left[
P\left( \frac{X}{2^{k-1}} \right) - P\left( \frac{X}{2^k} \right)	  
\right]
= 
2\lim_{k \to \infty}
\frac{P\left(\frac{X}{2^{k-1}} \right)-P(0)}{\frac{X}{2^{k-1}}}
-
\lim_{k \to \infty}
\frac{P\left(\frac{X}{2^k} \right)-P(0)}{\frac{X}{2^k}}
\\
&= 2P'(0) - P'(0) = P'(0).
\end{align*}
That is, for all $X \leq \frac{\oX}{2}$, we must have $P'(X) = P'(0)$, i.e.
$P(X) = P(0) + P'(0) X$. For all $\frac{\oX}{2} < X \leq \oX$ we can apply \cref{E:halfs} at $\frac{X}{2}$ and get
\[
P\left( X \right)
=
P\left( 2 \frac{X}{2} \right)
=P\left(\frac{X}{2}\right) 
+ 2 \left[
P\left(\frac{X}{2}\right) 
-P\left(\frac{X}{4}\right) 
\right]
= P(0) + P'(0) X.
\]
Noting that $0=P(\oX)=P(0)+P'(0)\oX$, so $P(0)=P'(0) \oX$ and denoting $a = -P'(0) > 0$, we get that $P(X) = a \left( \oX-X \right)$ for some $a>0$ for all $X \in [0,\oX]$.
\end{proof}

\subsection{Proof of \texorpdfstring{\cref{P:necessary_identical}}{proposition 4}}

\begin{proof}
  To compare the two sequential oligopolies in the proposition, i.e.\ the original $T$-period oligopoly and the new $T+1$-period oligopoly with an added firm at the end, I consider sequential oligopoly $\bn = (n_1,\dots,n_T,n_{T+1})$, where $n_{T+1} \in \{0,1\}$.
  With slight abuse of notation, I use $X_t^*(X_{t-1})$ to denote the cumulative quantity after period $t$, conditional on cumulative quantity prior to period $t$ being $X_{t-1}$ and $X_t^*$ to denote the cumulative equilibrium quantity on path, i.e.\ $X_t^* = X_t^*(X_{t-1}^*(\dots(X_1^*(0))\dots))$.
  Note that the assumption states that the realized quantities are independent on $n_{T+1}$.

  If $n_{T+1}=1$, i.e.\ if there is a firm $n+1$ who observes $X_T$, then it maximizes $x_{n+1} a_{n+1} \left(\oX_c - X \right)$, which gives an best-response function $X^*(X_T) = \frac{\oX_c + X_T}{2}$. Of course, when $n_{T+1}=0$, we get that $X^*(X_T) = X_T$. These two cases can be combined by 
  \begin{equation} \label{E:nonidentBRlast}
    X^*(X_T) 
    = \frac{n_{T+1} \oX_c + X_T}{1+n_{T+1}}
    = \oX_c - \frac{\oX_c - X_T}{1+n_{T+1}}.
  \end{equation}
  I prove the claim by induction. Fix a period $t \leq T$ and suppose that at each period $s > t$ all players $i \in \prt_s$ have $\oX_c^i = \oX_c$. Moreover, suppose that the best-responses of the followers imply
  \begin{equation} \label{E:nonidentBR}
    X^*(X_t) 
    = X^*(X_T^*(\dots(X_T)\dots)) 
    = \oX_c - \frac{\oX_c - X_t}{\prod_{s=t+1}^{T+1} (1+n_s)}.
  \end{equation}
  Note that these assumptions are satisfied for $t=T$ as (1) whenever a firm arrives at period $T+1$ by assumption it has the parameter $\oX_c$, and (2) \cref{E:nonidentBRlast}.
  For induction step, note that each firm $i \in \prt_t$ maximizes
  \[
    \max_{x_i} x_i a_i \left(
      \oX_c^i - X^*(X_t)
    \right)
    =
    \frac{a_i}{\prod_{s=t+1}^{T+1} (1+n_s)}
    \max_{x_i} x_i \left(
      \oX_c - X_t
      +
      \prod_{s=t+1}^{T+1} (1+n_s) (\oX_c^i - \oX_c)
    \right).
  \]
  The equilibrium behavior requires that
  \[
    x_i^*(X_{t-1})
    =
    \oX_c - X_t^*(X_{t-1})
    +
    \prod_{s=t+1}^{T+1} (1+n_s) (\oX_c^i - \oX_c).
  \]
  In particular, on the equilibrium path, i.e.\ for $X_{t-1}^*$ by assumption we must have that $x_i^*(X_{t-1}^*)$ and $X_t^*(X_{t-1}^*)$ are independent on $n_{T+1}$. This is only true if $\oX_c^i = \oX_c$. This establishes the first induction assumption. Suppose now that $\oX_c^i = \oX_c$ for all $i \in \prt_t$. Then combining the optimality conditions we get
  \[
    X_t^*(X_{t-1})
    = 
    \sum_{i \in \prt_t} x_i^*(X_{t-1}) + X_{t-1}
    =
    n_t \oX_c - n_t X_t^*(X_{t-1}) + X_{t-1}
    = \frac{n_t \oX_c + X_{t-1}}{1+n_t}.
  \]  
  Inserting this to \cref{E:nonidentBR} gives
  \[
    X^*(X_t)
    =
    X^*(X_t^*(X_t))
 = \oX_c - \frac{\oX_c - X_{t-1}}{\prod_{s=t}^{T+1} (1+n_s)}.
  \]
  This proves the second part of the induction assumption and therefore completes the proof.
\end{proof}

\end{document}